  \providecommand\BibTeX{{%
    \normalfont B\kern-0.5em{\scshape i\kern-0.25em b}\kern-0.8em\TeX}}}
\newtheorem{problem}{Problem}
\newtheorem{definition}{Definition}
\newtheorem{theorem}{Theorem}
\newcommand{\squishlist}{
   \begin{list}{$\bullet$}
    { \setlength{\itemsep}{0pt}      \setlength{\parsep}{2pt}
      \setlength{\topsep}{1pt}       \setlength{\partopsep}{0pt}
      \setlength{\leftmargin}{1em} \setlength{\labelwidth}{1em}
      \setlength{\labelsep}{0.5em} } }
\newcommand{\squishend}{
    \end{list}  }
\newcommand{\topcaption}{%
\setlength{\abovecaptionskip}{5pt}%
\setlength{\belowcaptionskip}{5pt}%
\caption}
\begin{document}

\title{Multi-Stage Network Embedding for Exploring Heterogeneous Edges}

\author{Hong Huang}
    \email{honghuang@hust.edu.cn}
\author{Yu Song}
    \email{yusonghust@hust.edu.cn}
\author{Fanghua Ye}
    \email{smartyfh@outlook.com}
\author{Xing Xie}
    \email{xing.xie@microsoft.com}
\author{Xuanhua Shi}
    \email{xhshi@hust.edu.cn}
\author{Hai Jin}
    \email{hjin@hust.edu.cn}
    
\thanks{Hong Huang, Yu Song, Xuanhua Shi and Hai Jin are with the National Engineering Research Center
for Big Data Technology, Service Computing Technology
and System Lab, and School of Computer Science and Technology, Huazhong University of Science and Technology, Wuhan, 430074, China. Fanghua Ye is with the Department of Computer Science, University College London, London, UK. Xing Xie is with Microsoft Research Asia, Beijing, China.}

\renewcommand{\shortauthors}{}
\begin{abstract}
 The relationships between objects in a network are typically diverse and complex, leading to the heterogeneous edges with different semantic information. In this paper, we focus on exploring the heterogeneous edges for network representation learning. By considering each relationship as a view that depicts a specific type of proximity between nodes, we propose a multi-stage non-negative matrix factorization (MNMF) model, committed to utilizing abundant information in multiple views to learn robust network representations. In fact, most existing network embedding methods are closely related to implicitly factorizing the complex proximity matrix. However, the approximation error is usually quite large, since a single low-rank matrix is insufficient to capture the original information. Through a multi-stage matrix factorization process motivated by gradient boosting, our MNMF model achieves lower approximation error. Meanwhile, the multi-stage structure of MNMF gives the feasibility of designing two kinds of non-negative matrix factorization (NMF) manners to preserve network information better. The united NMF aims to preserve the consensus information between different views, and the independent NMF aims to preserve unique information of each view. Concrete experimental results on realistic datasets indicate that our model outperforms three types of baselines in practical applications.
\end{abstract}

\begin{CCSXML}
<ccs2012>
<concept>
<concept_id>10010147.10010257.10010293.10010319</concept_id>
<concept_desc>Computing methodologies~Learning latent representations</concept_desc>
<concept_significance>500</concept_significance>
</concept>
</ccs2012>
\end{CCSXML}

\ccsdesc[500]{Computing methodologies~Learning latent representations}

\begin{CCSXML}
<ccs2012>
<concept>
<concept_id>10002951.10003260.10003282.10003292</concept_id>
<concept_desc>Information systems~Social networks</concept_desc>
<concept_significance>300</concept_significance>
</concept>
</ccs2012>
\end{CCSXML}

\ccsdesc[500]{Information systems~Social networks}

\keywords{Network Embedding, Non-negative Matrix Factorization, Data Mining}

\maketitle

\section{Introduction}\label{introduction}
The relationships among objects typically take the form of networks in the real world, such as biological networks~\cite{li2017scored}, social networks~\cite{van2014online}, and academic networks~\cite{tang2008arnetminer}. Raw network data are usually complex and hard to be processed. Traditional network representation methods, such as adjacent matrix and incidence matrix, often bring extra challenges for the subsequent social computing tasks \cite{cui2018survey}. Recently, network representation learning~(a.k.a. network embedding), aiming to transform networks into low dimensional vectors, has aroused widespread concerns from industry and academia. Meanwhile, real-world networks usually contain abundant information, which further leads to complex and heterogeneous edges between nodes in networks. For example, in an academic network, there may be academic collaboration between the two authors, meanwhile, one author can also cite the papers by another author, causing the relationships between two authors to be complicated and diverse. Another example is the twitter network, one user can retweet, reply, like and mention another user's tweets. All of these behaviors can be abstracted as an edge between two users to reflect their interactions but different behaviors indicate different intimacy between two users. Obviously, different types of relationships play an important role in exploring the heterogeneous edges for many emerging applications. To this end, it is quite reasonable to embed such networks into a low dimensional vectors via considering the semantic meanings of heterogeneous edges.    

Heterogeneous networks are notoriously difficult to mine, thus a great deal of effort has been devoted to representing networks with heterogeneous edges. In the last few years, a plethora of heterogeneous information network (HIN) embedding approaches~\cite{tang2015pte,dong2017metapath2vec,fu2017hin2vec,lu2019relation} have been successfully applied to explore the heterogeneity of network data. However, these methods typically fuse the information of multiple types of relationships into an unified representation without considering the relation-specific representations, which ignores the nuanced differences between these relationships. For instance, two students share a common classmate and two students both join a same league. Hence, two types of edges are established between the two students reflecting two types of relationships. It's worth noting that both edges can have the same meaning, that is, two students are similar in some aspects. But the former edge may indicate the educational information of them while the latter may indicate the information about their interests. This phenomenon shows that it is indispensable for preserving the relation-specific information. As a result, a better solution is eagerly needed to explore the heterogeneous edges for network embedding. 

Over the past two decades, multi-view learning has achieved promising performance on mining heterogeneous data, such as clustering~\cite{liu2013multi}, natural language processing~\cite{phillips2002exploiting} and image processing~\cite{li2011difficulty}. In view of this, in this paper, we formalize the problem as multi-view learning from multiple network views. Specifically, given a heterogeneous network, as we focus on exploring the heterogeneous edges, we assume its nodes are homogeneous but the edges are heterogeneous. The heterogeneous edges indicate that there are many types of proximity between nodes in such a network. Each proximity can be defined as a view of the network, thus multiple proximities further generate a multi-view network. Then we solve a multi-view network embedding problem to learn a better network representation for such networks.

\begin{figure*}[t]
    \centering
    \includegraphics[scale=0.4]{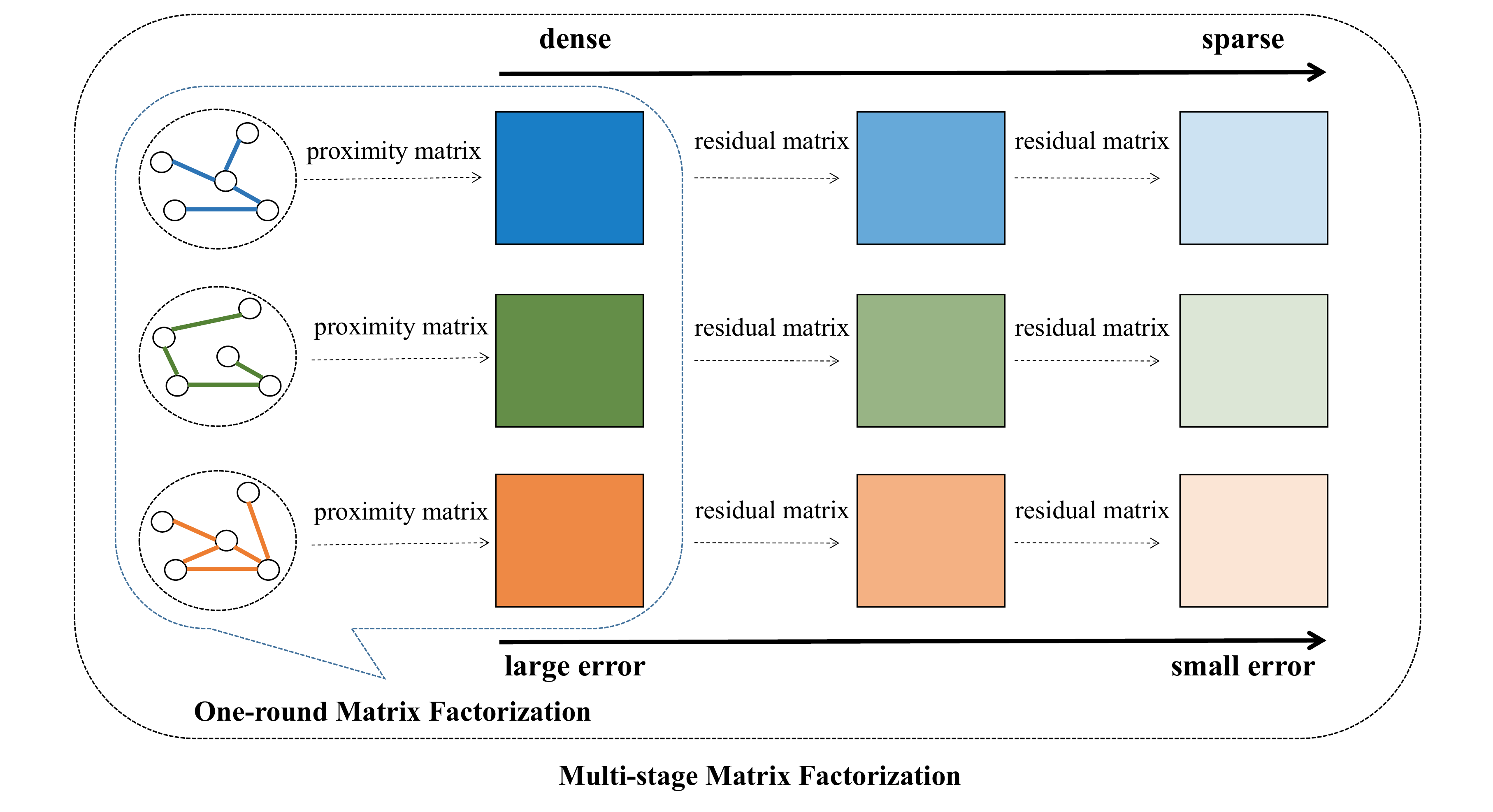}
    \vspace{-3mm}
    \caption{Illustration of the multi-stage matrix factorization. Given a multi-view network with $3$ views which correspond to three types of interpersonal relations. The difference between one-round matrix factorization and multi-stage matrix factorization lies in the following points. (1). The one-round matrix factorization process decomposes dense matrices only once resulting a large approximation error, which usually doesn't satisfy the low rank condition. (2). The multi-stage matrix factorization process successively decomposes the residual matrices in a forward manner to reduce the approximation error. With the stage going deep, the matrices will be more and more sparser, thus satisfy the low rank condition gradually.}
    \label{multi_view_network}
\end{figure*}

There are various network embedding methods~\cite{perozzi2014deepwalk,tang2015line,grover2016node2vec}, and it has been proven that many of them are equivalent to matrix factorization essentially~\cite{qiu2018network}. In consideration of the extensive universality of matrix factorization in multi-view learning~\cite{liu2013multi,ou2016multi,li2018multi}, it is more suitable for solving the multi-view network embedding problems. Although latest matrix factorization works \cite{zhang2018arbitrary,wang2017community} have successfully transformed single-view networks into representations, there still exist several limitations when matrix factorization is applied to represent multi-view networks. Firstly, traditional matrix factorization methods usually cause a relatively large error when it is applied to network embedding. In most cases, the proximity matrix to be decomposed is usually a dense matrix, because the proximity matrix not only reflects the first-order proximity information like the adjacency matrix, but also considers extra high-order proximity information of networks. Since a one-round matrix factorization process just decomposes such proximity matrix once, without further considering the residual part, it usually leads to a relatively large error. As a result, a one-round matrix factorization may obtain a sub-optimal representation, which is insufficient to encode the structural information in the original network.

Secondly, network data are usually partial distinct as well as partial correlated between different views. This fact results in two categories of information, i.e. consensus information and unique information. In other words, different views may carry consensus information, at the same time, a specific view may carry some unique information that the others don’t have. Current multi-view matrix factorization methods usually preserve either consensus information~\cite{liu2013multi} or unique information~\cite{christoudias2012multi} in multiple views. It is unquestionable that consensus information and unique information both play important roles in multi-view networks, thus it is necessary to consider both of them simultaneously for representing such networks, which will be illustrated in Sec.~\ref{consensus_vs_unique}. 

As shown in Fig.~\ref{multi_view_network}, the one-round matrix factorization usually decomposes dense matrices with a large approximation error. To overcome the above limitations, we design a \textbf{M}ulti-stage \textbf{N}on-negative \textbf{M}atrix \textbf{F}actorization model, named as \textbf{MNMF}, which factorizes the residual matrices iteratively. The residual matrices denote the approximation error between the factored matrices and the recovered matrices. Intuitively, the approximation error can be reduced by decomposing residual matrices in a forward manner. With the stage going deeper, the residual matrices will become sparser, meanwhile the approximation error will reduce gradually. As the multi-stage structure of MNMF model is based on the idea of gradient boosting, it gives interpretability for the advantages of the MNMF model over one-round NMF, which will be discussed in Sec.~\ref{discuss}.

Since the approximation error is reduced, there is no doubt that our model can better preserve the information of networks. Besides, the multi-stage structure of our model makes it easy to capture the consensus and unique information of networks concurrently and explicitly. To be specific, we run united NMF stage-by-stage on residual matrices in the first several stages. This process aims to model the consensus information between different views. After that, we further allow each view to preserve its unique information, thus the residual matrices will be decomposed independently in the last several stages. Finally, we aggregate the embedding results of both the united NMF stages and the independent NMF stages to obtain the final network representation.  

Overall, the main contributions are summarized as follows:
\squishlist

\item We propose a multi-stage NMF model, namely MNMF, to  learn a representation for a network from its multiple views. The multi-stage structure not only reduces the approximation error for matrix factorization, but also gives the feasibility to explicitly preserve the consensus and unique information of multiple views in different stages.
\item By performing an extensive analysis, we explain the differences and advantages between the MNMF model and the one-round non-negative matrix factorization. We show that the MNMF model is a gradient boosting-like approach, which can successively decompose the residual matrices to yield better network representations.

\item Experimental results on node classification and node clustering tasks demonstrate that the MNMF model is superior to the existing dominant baseline approaches. In addition, ancillary experiments also help to explain the advantages of our model.
\squishend

The rest of this paper is organized as follows. Sec. ~\ref{related} reviews the related work. Sec. ~\ref{method} formalizes the problems and introduces the model. Sec. ~\ref{exp} describes the design of experiments and reports the experimental results. Sec. ~\ref{conclusion} presents the conclusions and future works.

\section{Related Work}\label{related}

\subsection{Network Embedding}
Network embedding aims to find a nonlinear function to embed the raw network data into a low-dimensional latent space. Since the Word2vec \cite{mikolov2013distributed} was invented, the idea of skip-gram model has been borrowed to various network embedding methods, such as Deepwalk \cite{perozzi2014deepwalk}, LINE \cite{tang2015line} and node2vec \cite{grover2016node2vec}. Besides, various network embedding methods based on matrix factorization have been proposed, such as GraRep \cite{cao2015grarep} and HOPE \cite{ou2016asymmetric}. Recently, a study \cite{qiu2018network} proves that many network embedding methods inspired by skip-gram model are equivalent to matrix factorization and proposes the NetMF model for network embedding. In addition, heterogeneous network embedding methods have an increasing interests from industry and academia. For example, metapath2vec \cite{dong2017metapath2vec}, HIN2vec \cite{fu2017hin2vec} and RHINE~\cite{lu2019relation} utilize meta path to encode the structural and semantic information into network embeddings. PTE \cite{tang2015pte} is an extension of LINE applied in heterogeneous networks. EOE~\cite{xu2017embedding} is designed for embedding the distinct but related homogeneous networks via measuring the similarity between nodes in different networks. DMNE~\cite{ni2018co} mainly focus on modeling the cross-network relationships between multiple networks, leading to further refined network embeddings. Another rising line of network embedding works is graph neural networks \cite{kipf2016semi,hamilton2017inductive,morris2019weisfeiler}. However, most of them are unable to be applied to multi-view networks directly, and need labeled information for training. Compared with that, our proposed method is totally unsupervised and the learned representations can be general to downstream tasks. 

Apart from above methods, learning representations for multi-view networks have not been well explored. MVE~\cite{qu2017attention} first learns single view representations and then combines them via weighted averaging. MNE~\cite{zhang2018MNE} learns multiple relationships via a joint training strategy. However, both of them fail to consider the complex relationships between network views.  Although there are some works~\cite{lu2019auto,xu2019multi} preserving consensus and unique information simultaneously, they only consider low-order proximity between nodes, which makes them hard to achieve satisfactory performance. Compared with them, our model explicitly preserves consensus and unique information in high order proximity matrices via two different factorization manners. 
Moreover, since tensor factorization~\cite{de2009survey,lu2011survey} is an extension of matrix factorization, it is naturally applicable to learning the embedding of heterogeneous networks. Generally, the unified proximity matrix of a graph with heterogeneous edges can be treated as a collection of proximity matrices from different views, thus a great deal of efforts~\cite{fernandes2018dynamic,fu2015joint,kolda2008scalable, balazevic2019tucker} have been devoted to modeling heterogeneous edges by tensor factorization.

\subsection{Multi-View Learning}
Another related work is about multi-view learning. Some traditional multi-view learning algorithms, such as co-training \cite{kumar2011co}, co-clustering \cite{zhao2017multi}, cross-domain fusion \cite{elkahky2015multi} and sub-space learning \cite{wang2015deep}, have been widely used in network analysis. Considering multiple views will be helpful in various network mining tasks such as community detection~\cite{brodka2011degree} and link prediction~\cite{matsuno2018mell}. Non-negative matrix factorization (NMF) is also widely used for multi-view learning due to its interpretability. Currently, most of NMF based network embedding works are designed for single-view networks, such as BoostNE \cite{li2018multi} and M-NMF\cite{wang2017community}. Besides, existing multi-view NMF based methods \cite{ou2016multi,liu2013multi,xing2019multi} usually fail to obtain a satisfactory result for representing multi-view networks. Most of them jointly decompose matrices of multiple views in one-round, which can not preserve complete information of multi-view networks as well as causing large approximation error. In this work, the proposed algorithm is a general solution to mine multi-view networks since it can learn low-dimensional representations for social computing tasks. Although there exist some methods considering simultaneous or joint factorization manner to decompose multiple matrices together such as CMF~\cite{singh2008relational}, Pseudo-deflation based NMF~\cite{kim2015simultaneous} and GeoMF~\cite{lian2014geomf}, these methods are similar to ours but still have significant differences from ours. Moreover, our method allows an adjustable balance between joint factorization and independent factorization through controlling the number of stages for each manner, leading to better effectiveness 
and flexibility of our method.

\section{The Proposed MNMF Model}\label{method}

\subsection{Preliminaries}
We first give the definition of a multi-view network:
\begin{definition}\label{def1}
{\textbf{Multi-View Network}}
A multi-view network is defined as $\mathcal{G} = \left\{\mathcal{V},\mathcal{G}_1,\mathcal{G}_2,\cdots,\mathcal{G}_K\right\}$, where $\mathcal{V}$ is the node set shared by all views, $K$ is the number of all observed views, and $\mathcal{G}_k~(1\leq k \leq K)$ is the $k$-th view. $\mathcal{G}_k$ can be regarded as a single-view network, reflecting a single and distinct relationship among nodes.
\end{definition}

Previous matrix factorization based works have been successfully applied to learn representations for networks, which can be divided into two categories. One is starting from the network adjacency matrix $\bm{A}$ and decomposing a high order transition matrix, such as GraRep~\cite{cao2015grarep}. The other is starting from spectral graph theory and decomposing the equivalent matrix for existing skip-gram model based network embedding methods ~\cite{qiu2018network}. Along with the latter, we study Deepwalk \cite{perozzi2014deepwalk} with $b$ negative samples and $T$-sized window as an example since it is more general and efficient than other methods, which is equivalent to decompose the following matrix of a network $\mathcal{G}$:
\begin{equation}
    \bm{X} = \frac{vol(\mathcal{G})}{bT}\left(\sum_{t=1}^T \bm{P}^t\right)\bm{D}^{-1}
    \label{dw_matrix}
\end{equation}
where $\bm{D}$ is the degree matrix for network $\mathcal{G}$, $vol(\mathcal{G}) = \sum_{i}\sum_{j}\bm{A}_{i,j}$,  $\bm{A}_{i,j}$ denotes the edge weight between vertices $i$ and $j$, and $\small{\bm{P} = \bm{D}^{-1}\bm{A}}$. 

We choose non-negative matrix factorization~(NMF)~\cite{lee1999learning} to learn network representations, as NMF enforces the inputs to be expressed as linear additive combinations, thus has the property of interpretability. Given a network $\mathcal{G}$, Eq.~\eqref{dw_matrix} cannot be guaranteed to be a non-negative matrix. As a result, we reformulate it as $\footnotesize{\bm{M}_{i,j} = log(\max (\bm{X}_{i,j},1))}$ to ensure $\bm{M}$ is a non-negative matrix. Original matrix $\bm{X}$ is a dense matrix, despite $\bm{M}$ would be sparser it is still a dense matrix. In order to obtain the representation, standard NMF only performs the following one-round matrix factorization process:
\begin{equation}
    \centering
    \text{min}\quad \Vert \bm{M} - \bm{U}\bm{V} \Vert_F^2, \quad s.t. \quad \bm{U} \geq 0, \bm{V} \geq 0
    \label{nmf}
\end{equation}
where $\bm{U}$ can be interpreted as the representations of nodes that act as “center” nodes while $\bm{V}$ as “context” nodes~\cite{levy2014neural} in Deepwalk. However, Eq.~\eqref{nmf} usually causes a relative large approximation error, as $\bm{M}$ is a dense matrix that makes the low-rank condition untenable. Taking approximation error into consideration, Eq.~\eqref{nmf} can be formulated as:
\begin{equation}
    \centering
    \bm{M} \approx \widetilde{\bm{M}} = \bm{U}\bm{V} \Rightarrow \bm{M} = \widetilde{\bm{M}} + \bm{\Delta}
\end{equation}
where $\bm{\Delta}$ introduces the approximation error. The smaller $\bm{\Delta}$ means the smaller approximation error and the better representation. Obviously, we can further decompose $\bf{\Delta}$ in a forward manner to reduce the approximation error, thus a multi-stage NMF is introduced as following:
\begin{equation}
\begin{aligned}
    \text{min} &\quad \Vert \underbrace{\overbrace{\bm{M} - \bm{U}_1 \bm{V}_1}^{\bm{\Delta}_1} - \bm{U}_2\bm{V}_2}_{\bm{\Delta}_2} - \cdots - \bm{U}_q\bm{V}_q \Vert_F^2 \\
\Rightarrow    \text{min} &\quad \Vert \bm{M} - \sum_{l=1}^q \bm{U}_l \bm{V}_l \Vert_F^2, \quad s.t. \quad \bm{U}_l \geq 0, \bm{V}_l \geq 0
\end{aligned}
\label{boostnmf}
\end{equation}

In Eq.~\eqref{boostnmf}, the NMF process is performed by $q$ stages, thus the approximation results can be learned by $q$ individual learners. Similar to ensemble learning, it usually achieves better performance and lower approximation error. However, when coming across multi-view networks, we still need to handle two problems. One is how to adapt Eq.~\eqref{boostnmf} to learn representations for multi-view networks, since there are multiple views to be considered. The other is how to preserve consensus and unique information simultaneously, since they are essential to represent the network comprehensively and accurately. Consequently, we formally define the problem of multi-view embedding as following:
\begin{problem}\label{prob1}
{\textbf{Multi-View Network Embedding}} Given a multi-view network $\mathcal{G} = \{\mathcal{V},\mathcal{G}_1,\mathcal{G}_2,\cdots,$ $\mathcal{G}_K \}$, the number of stage $L$ and the embedding dimension $d_l$ in $l$-th stage. The multi-stage multi-view network embedding problem is defined as finding a series of representations  $\bm{U}^l$ $\in$ \bm{$\mathbb{R}^{|\mathcal{V}|*d_l}$} ($d_l$ $\ll$ $|V|$, 1 $\le$ l $\le$ L) for nodes in the network so the consensus and unique information in multiple views can be gradually preserved when the stage goes from 1 to $L$.
\end{problem}

\subsection{The MNMF Model}

In order to solve the Problem~\ref{prob1}, we propose a simple yet effective model named as MNMF, which is illustrated in Fig.~\ref{MNMF}. Overall speaking, we design two manners for the multi-stage matrix factorization process: the united NMF for preserving the consensus information between different views, and the independent NMF for preserving the unique information of each view. Here we first preserve the consensus information and then deal with unique information.

\begin{figure}[t]
    \centering
    \includegraphics[scale=0.4]{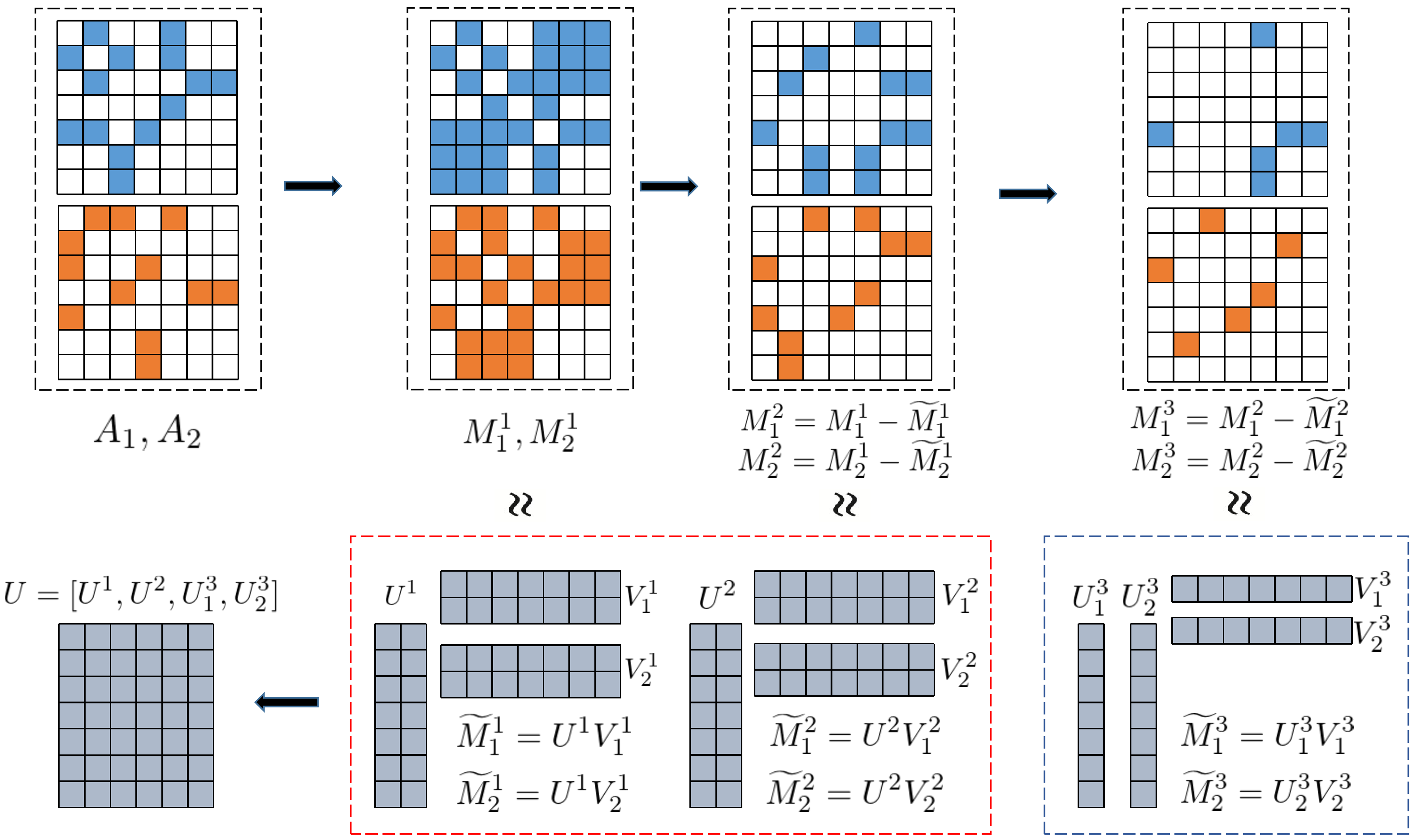}
    \caption{The illustration of MNMF model. \small Take two views and 3 stages as an example. By factorizing the not well approximated part $R_k^l$ iteratively, the MNMF model preserves consensus information (1-st and 2-nd stage) as well as unique information (3-rd stage) of different granularity. The robust representation of this multi-view network is obtained by assembling all representations from 3 stages. }
    \label{MNMF}
\end{figure}

In detail, at first, in order to generate consensus representations in the first $q$ stages, we design the following objective function:
\begin{equation}
    \centering
    \begin{aligned}
     \text{min} &\qquad \sum_{k=1}^K (\alpha_k)^{\gamma} \left \Vert \bm{M}_k - \sum_{l=1}^q \bm{U}^l \bm{V}_k^l \right \Vert_F^2 \\
     \text{s.t.} &\qquad \bm{U}^l\geq0, \bm{V}_k^l\geq 0,  \sum_{k=1}^K \alpha_k=1, \alpha_k \geq 0 \\
    \end{aligned}
    \label{naive_consensus}
\end{equation}
where $\alpha_k$ is weight coefficient for view $k$ and hyper-parameter $\gamma$ controls the weight distribution. Eq.~\eqref{naive_consensus} is inspired by preserving consensus information, and the $\bm{U}^l$ $\in$ $\mathbb{R}^{|\mathcal{V}|*d_l}$ represents the consensus information in stage $l$ since it is shared by all views. However, if we optimize different $\bm{U}^l$, $\bm{V}_k^l$ and $\alpha_k$ in Eq.~\eqref{naive_consensus} simultaneously  using exitsing methods, such as coordinate descent or block-coordinate descent \cite{Tseng2001Convergence}, it would break down the forward stage manner since variables can be updated in any order. One possible solution is successively decomposing the previous not approximated parts (i.e. $\bf{\Delta}$) in multiple stages. The not approximated parts are defined as residual matrices:
\begin{equation}
    \bm{R}_k^l = \left\{
    \begin{aligned}
    & \log\left({\max\left(\frac{vol(\mathcal{G}_k)}{bT}(\sum_{t=1}^{T} {\bm{P}_k}^t) \bm{D}_k^{-1},1\right)}\right), l = 1\\
    & \max(\bm{R}_k^{l-1} - \bm{U}^{l-1}\bm{V}_k^{l-1}, 0), 2 \le l \le q
    \end{aligned}
\right.
\label{residual}
\end{equation}
Note here the $\max$ operation is adopted to clip negative entries as zero. And it is easy to find that $\bm{R}_k^1 = \bm{M}_k$. Then we solve a close-form problem of Eq.~\eqref{naive_consensus} at each stage in a forward manner, which results in the reformulated objective function in $l$-th ($1 \le l \le q$) stage :
\begin{equation}
    \begin{aligned}
    \text{min} &\qquad \sum_{k=1}^K (\alpha_k^l)^{\gamma}{\Vert \bm{R}_k^l - \bm{U}^l\bm{V}_k^l \Vert_F^2}, \\ 
    \text{s.t.} &\qquad \bm{U}^l\geq0, \bm{V}_k^l\geq 0,  \sum_{k=1}^K \alpha_k^l=1, \alpha_k^l \geq 0 
    \end{aligned}
    \label{consensus}
\end{equation}
where $\alpha_k^l$ is weight coefficient for view $k$ in $l$-th stage and $\gamma$ controls the weight distribution. Eq.~\eqref{consensus} is repeated stage-by-stage, with the stage going deeper, the residual matrices will become sparser and easy to be factorized, at the same time the approximation error will reduce gradually. At the last $L-q$ stages, we tend to preserve unique information by NMF in each view at $l$-th ($q < l \le L$) stage:
\begin{equation}
    \text{min} \Vert \bm{R}_k^l - \bm{U}_k^l \bm{V}_k^l \Vert_F^2, \bm{U}_k^l\geq0, \bm{V}_k^l\geq0, k=1,2,\cdots,K
    \label{complement}
\end{equation}
where $\bm{U}_k^l$ represents the unique information of view $k$ when independently generated, the residual matrices are redefined as
\begin{equation}
    \bm{R}_k^l = \max(\bm{R}_k^{l-1} - \bm{U}_k^{l-1} \bm{V}_k^{l-1}, 0)
    \label{residual_unique}
\end{equation}
Note that Eq.~\eqref{complement} is also repeated for all views stage-by-stage. 

The final network representation will be the ensemble of the results in all stages obtained by an aggregator, which can be sum, pooling or other designed functions. For simplicity, in this paper, we choose to concatenate the generated embeddings of all stages to preserve both consensus and unique information simultaneously. Moreover, our model can also be applied to such multi-view networks that have a few nodes unobserved in each view. In this case, we just need to pad those unobserved nodes' entries as zero in the adjacency matrix of each view.

\subsection{Discussion} \label{discuss}
\noindent (1) \textbf{United NMF}. The MNMF model contains two matrix factorization manners, i.e. united non-negative matrix factorization and independent non-negative matrix factorization, corresponding to Eq.~\eqref{consensus} and Eq.~\eqref{complement} respectively. Here we mainly discuss united non-negative matrix factorization process since Eq.~\eqref{complement} is a naive NMF  process that has been well explored in literatures \cite{lee2001algorithms,cai2008non}. 

The idea behind Eq.~\eqref{consensus} is utilizing a shared basis $\bm{U}^l$ to encode consensus information among multiple views. This strategy can be understood as that it essentially requires nodes in each view to share a basis vector (each row in $\bm{U}^l$ corresponds to a node) as their representation during matrix factorization, but the coordinates in each view (each column in $\bm{V}_k^l$) under the basis are different. As mentioned before, $\bm{U}$ can be interpreted as the representations of nodes that act as “center” nodes, while $\bm{V}$ as “context” nodes in skip-gram model. Similar to the idea in \cite{qu2017attention}, the united matrix factorization process allows the “center” representations to be shared across multiple views, The main idea is similar to MVE but the differences lie in two points: (1) The MVE utilizes the shared node embedding to regularize the view-specific node embeddings then fuses them to yield the consensus representation, i.e. the final network embedding; compared with that, MNMF utilizes view-specific node embeddings to regularize the shared node embedding thus it can be treated as the consensus representation naturally, without the additional fusing step. Moreover, MNMF designs the independent factorization manner to consider the view-specific information explicitly. (2) The fusing step in MVE needs extra labels served as supervised signals while the MNMF is totally unsupervised, leading to better generalization of the learned embedding, which brings two benefits:
\squishlist
\item Different views will be projected into the same semantic space via $\bm{V}_k^l$ as a projection matrix for transforming them back, thus it is natural to regard $\bm{U}^l$ as their consensus information; 
\item It allows multiple views to collaborate together to learn their consensus information, thus building a bridge between different views to make them influence mutually and promote implicitly.
\squishend

\noindent (2) \textbf{Multi-stage NMF vs One-round NMF}. Here we explain the difference between the multi-stage NMF process and the one-round NMF process. It is simple to prove that the original objective function in Eq.~\eqref{naive_consensus} is equivalent to a one-round NMF process as but essentially different, we will explain the differences in detail in this part:
\begin{equation}
    \centering
    \begin{aligned}
     \text{min} &\qquad \sum_{k=1}^K (\alpha_k)^{\gamma} \left \Vert \bm{M}_k - \sum_{l=1}^q \bm{U}^l \bm{V}_k^l \right \Vert_F^2 \\
     =  \text{min} &\qquad \sum_{k=1}^K (\alpha_k)^{\gamma} \left \Vert \bm{M}_k - \bm{U}\bm{V}_k  \right \Vert_F^2 \\
     \text{s.t.} &\qquad \bm{U}^l\geq0, \bm{V}_k^l\geq 0,  \sum_{k=1}^K \alpha_k=1, \alpha_k \geq 0 \\
    \end{aligned}
\end{equation}
where $\bm{U} = [\bm{U}^1 \quad \bm{U}^2 \quad \cdots \quad \bm{U}^q]$ and $\bm{V}_k = [\bm{V}_k^1 \quad \bm{V}_k^2 \quad \cdots \quad \bm{V}_k^q]^\mathrm{T}$. However, the main difference between Eq.~\eqref{naive_consensus} and our proposed multi-stage NMF process (illustrated as Eq.~\eqref{residual} and Eq.~\eqref{consensus}) lies in the optimization method. Eq.~\eqref{naive_consensus} optimizes all variables simultaneously while Eq.~\eqref{consensus} is solved in a successive way, which means once the solution $\bm{U}^l$, $\bm{V}_k^l$ and $\alpha_k^l$ are obtained at a certain stage $l$, they will be fixed for the remaining stages. This optimization strategy is similar to a functional gradient boosting~\cite{friedman2002stochastic} method. To this end, for a certain view $k$, we ignore the view weight coefficient and let $f_k^{(q)}$, $L$ be
\begin{equation}
    \begin{aligned}
    f_k^{(q)} = f(\bm{U}^1,\cdots,\bm{U}^q,\bm{V}_k^1,\cdots,\bm{V}_k^q) = \sum_{l=1}^q \bm{U}^l \bm{V}_k^l,\\
    L(\bm{M}_k,f_k^{(q)}) = \left \Vert \bm{M}_k - f_k^{(q)}   \right \Vert = \left \Vert \bm{M}_k - \sum_{l=1}^q \bm{U}^l \bm{V}_k^l  \right \Vert_F^2
    \end{aligned}
    \label{gbnmf}
\end{equation}
Here we can easily compute the gradient of Eq.~\eqref{gbnmf}, which is expressed as
\begin{equation}
\begin{aligned}
    g_q &= \left\lbrack \frac{\delta L (\bm{M}_k,f_k^{(q)})}{\delta f_k^{(q)}} \right\rbrack_{f_k^{(q)} = f_k^{(q-1)}} \\
    &= 2(\bm{M}_k - f_k^{(q-1)}) = 2(\bm{M}_k - \sum_{l=1}^{q-1} \bm{U}^l \bm{V}_k^l)
\end{aligned}
\label{grad_nmf}
\end{equation}
Based on Eq.~\eqref{grad_nmf}, by imposing the non-negative constraints to $f_k^{(q)}$ because of $\bm{U}^l $,$\bm{V}_k^l$ $\geq 0$ and ignoring the constant, the projected gradient can be derived by setting $l=1,\cdots,q$
\begin{equation}
\begin{aligned}
    \bm{M}_k &= \log\left({\max\left(\frac{vol(\mathcal{G}_k)}{bT}(\sum_{t=1}^{T} {\bm{P}_k}^t) \bm{D}_k^{-1},1\right)}\right) \\
    \bm{R}_k^l &=  \left \lbrack \left \lbrack \left\lbrack \bm{M}_k - \bm{U}^1\bm{V}_k^1 \right\rbrack_+ - \bm{U}^2\bm{V}_k^2 \right\rbrack_+  \cdots - \bm{U}^{l-1}\bm{V}_k^{l-1} \right\rbrack_+
\end{aligned}
\label{resudial_consensus_v2}
\end{equation}
where $[x]_+$ is applied for the non-negative constraint. Obviously, the Eq.~\eqref{resudial_consensus_v2} is same as Eq.~\eqref{residual}, thus each stage of the Eq.~\eqref{consensus} is established on the gradient descent direction of the previous stage's loss function. In a same vein, we can also prove that the objective function in Eq.~\eqref{complement} is a closed-form gradient boosting method. As a result, we conclude that the MNMF model is close to a gradient boosting approach, which illustrates the reason for the superiority of the MNMF model to existing NMF models.

\subsection{Optimization}
Since the objective function in Eq.~\eqref{consensus} is not convex over all variables simultaneously, we optimize them by successively updating only one variable at a time, while fixing the other variables. Here we directly drive the update rules for $\bm{U}^l$ and $\bm{V}_k^l$ as\footnote{The detailed derivation process is available in Appendix.}:
\begin{equation}
    \begin{aligned}
        \bm{U}^l &\leftarrow \bm{U}^l \odot \frac{\sum_k^K (\alpha_k^l)^{\gamma} \bm{R}_k^l {\bm{V}_k^l}^\mathrm{T}}{\sum_k^K (\alpha_k^l)^{\gamma} \bm{U}^l \bm{V}_k^l {\bm{V}_k^l}^\mathrm{T}} \\
        \bm{V}_k^l &\leftarrow \bm{V}_k^l \odot \frac{{\bm{U}^l}^\mathrm{T} \bm{R}_k^l}{{\bm{U}^l}^\mathrm{T} \bm{U}^l \bm{V}_k^l}
    \end{aligned}
    \label{update4consensus}
\end{equation}
where $\odot$ and $\frac{[\cdot]}{[\cdot]}$ are element-wise multiplication and division respectively.

Let's denote $\bm{W}_k^l = {\Vert \bm{R}_k^l - \bm{U}^l\bm{V}_k^l \Vert_F^2}$, then the update rule for $\alpha_k^l$ is in Eq.~\eqref{alpha_update}:
\begin{equation}
    \centering
    \alpha_k^l \leftarrow \frac{\left(\gamma \bm{W}_k^l\right)^{\frac{1}{1-\gamma}}}{\sum_{k=1}^K\left(\gamma \bm{W}_k^l\right)^{\frac{1}{1-\gamma}}}
\label{alpha_update}
\end{equation}

For Eq.~\eqref{complement}, we use the same optimization techniques as described in~\cite{lee2001algorithms}. 

In summary, the pseudo code of the MNMF model is shown in Algorithm.~\ref{MNMF_pcode}.

\begin{algorithm}[htbp]
\LinesNumbered
\KwIn{ A network with $K$ views, denoted as $\mathcal{G} = \left\{\mathcal{V},\mathcal{G}_1,\mathcal{G}_2,\cdots,\mathcal{G}_K\right\}$\;
The hyper-parameter $\gamma$, the number of consensus stages $q$, the number of total stages $L$, window-size $T$, the number of negative samples $b$ and embedding size $d$ in each stage.}
\KwOut{The representation $\bm{Z}$ for the network $\mathcal{G}$.}
$\triangleright$ \textbf{Consensus NMF process:} \\
\For{l = 1, 2, $\cdots$ ,q}{
According to Eq.~\eqref{residual}, calculate residual matrix for each view;\\
Initialize all variables with a uniform distribution $\mathcal{U}$(0, 0.02);\\
\Repeat
{\text{Convergence}}
{
According to Eq.~\eqref{update4consensus}, fixing other variables, update $\bm{U}^l$;\\
According to Eq.~\eqref{update4consensus}, fixing other variables, update $\bm{V}_k^l$ for each view;\\
According to Eq.~\eqref{alpha_update}, fixing other variables, update ${\alpha}_k^l$ for each view;\\
}
Obtain consensus representation $\bm{U}^l$;\\
}
$\triangleright$ \textbf{Unique NMF process:} \\
\For{l = q+1, q+2, $\cdots$, L}{
According to Eq.~\eqref{residual_unique} or Eq.~\eqref{residual} (only for $l=q+1$), calculate residual matrix for each view\;
\For{k=1,2,$\cdots$,K}{
According to Eq.~\eqref{complement}, perform a standard NMF;\\
Obtain unique representation $\bm{U}_k^l$;\\
}
}
$\triangleright$ \textbf{Aggregator process:} \\
$\bm{Z}$ = $\bm{concat}$[consensus representations, unique representations];\\
\Return Network representation $\bm{Z}$
\caption{The pseudo-code of MNMF Model}
\label{MNMF_pcode}
\end{algorithm}

\subsection{Convergence Analysis}
The convergence of the update rules can be analyzed as follows: We can divide the Eq.\eqref{consensus} into three subproblems: (1) The first subproblem is fixing other variables and updating $\bm{U}^l$; (2) The second subproblem is fixing other variables and updating $\bm{V}_k^l$; (3) The third subproblem is fixing other variables and updating $\alpha_k^l$. Each subproblem is a convex problem with respect to one variable. As a result, by solving the subproblems alternatively, our proposed algorithm will guarantee that we can find the optimal solution to each subproblem and finally, the Eq.~\eqref{consensus} will converge to a local minimum. Similarly, the Eq.~\eqref{complement} is also guaranteed to converge to a local minimum.

\subsection{Complexity Analysis} \label{complexity}
Briefly speaking, the time complexity is positively related to the number of nonzero entries of $\bm{R}_k^l$, denoted as $nnz(\bm{R}_k^l)$, thus the total number of nonzero entries of all $K$ views in $l$-th stage is denoted as ${nnz(\bm{R}^l) = nnz(\bm{R}_1^l) + nnz(\bm{R}_2^l) + \cdots + nnz(\bm{R}_K^l)}$. To simplify the analysis, we assume the dimensions in all layers are the same, denoted as $d$, number of stages in Eq.~\eqref{consensus}, Eq.~\eqref{complement} as $s_1$, $s_2$ respectively, and the number of iterations as $t$.  The total time complexity for Eq.~\eqref{consensus} is $\mathcal{O}(td \times (nnz(\bm{R}^1) + nnz(\bm{R}^2) + \cdots + nnz(\bm{R}^{s_1}))$ and for Eq.~\eqref{complement} it is $\mathcal{O}(td \times (nnz(\bm{R}^{s_1 + 1}) + nnz(\bm{R}^{s_1 + 2}) + \cdots + nnz(\bm{R}^{s_1 + s_2})))$. The superscript represents the stage id. As illustrated in Fig~\ref{MNMF}, due to the multi-stage structure of MNMF, the residual matrices to be decomposed become sparse quickly, so most of the time our algorithm decomposes a sparse matrix with an acceptable time complexity.

\section{Experiments}\label{exp}
In this part, we aim to answer three questions:
\squishlist
\item{\textbf{Q1:}} How does the proposed method compare with the state-of-the-art network representation baselines?
\item{\textbf{Q2:}} How do the different modules (i.e. multi-stage matrix factorization structure, view weight strategy and two factorization manners) influence the results?
\item{\textbf{Q3:}} How do the different hyper-parameters affect the performance?
\squishend

\begin{table*}[t]
\centering
\renewcommand\arraystretch{1.2}
\setlength{\tabcolsep}{1.5mm}
\topcaption{Overview of Datasets.}
\label{dataset}
\begin{tabular}{c|ccccc}
\hline
\textbf{Dataset} & \textbf{Views} & \textbf{Nodes} & \textbf{Edges} & \textbf{Labels} & \textbf{Configs}         \\ \hline\hline
AMiner\_small     & 3              & 8,438          & 2,433,356      & 8               & stages=16, dimensions=256, T=5, b=5 \\
AMiner\_large     & 3              & 37,555          & 2,885,964      & 8               & stages=8, dimensions=256, T=5, b=5  \\
PPI              & 6              & 4,328           & 1,661,756      & 50              & stages=8, dimensions=256, T=5, b=5  \\
Flickr           & 2              & 13,696          & 2,489,406      & 169             & stages=16, dimension=256, T=5, b=5  \\ \hline
\end{tabular}
\end{table*}

\subsection{Experimental Settings}
We perform experiments on three multi-view networks. The details are shown in Table~\ref{dataset}.
\squishlist
\item \textbf{AMiner}: The AMiner network~\cite{tang2008arnetminer} contains three views: author-citation, co-authorship and text similarity. The edges in text similarity view connect authors with their top 10 nearest neighbors, and the similarity is calculated based on the title and abstract by TF-IDF. We only preserve authors in eight research fields as~\cite{dong2017metapath2vec}. The authors is treated as nodes and research fields as node labels. Node that the dataset with subscript 'large' preserves all the nodes while the dataset with subscript 'small' only preserves nodes whose degree exceeds 2.
\item \textbf{PPI}: The PPI network~\cite{franceschini2012string} is a human protein to protein interaction network. Six views are constructed based on co-expression, co-occurrence, database, experiment, structural and protein-fused information. We treat gene groups as node labels.
\item \textbf{Flickr}: The network built from flickr dataset includes two views~\cite{tang2009relational}. Friendship view is the friendship network among the blog owners. Tag-proximity view is the network whose edges connect each node to its top 10 nearest neighbors, and the proximity is calculated based on the user's tags. The community memberships are used as node labels.
\squishend

We compare our method with \textbf{single-view based baselines}: 
\squishlist
\item \textbf{Deepwalk} \cite{perozzi2014deepwalk}: It is the most common baseline for network embedding based on random walk and skip-gram model. According to the recommendation in the original paper, the number and the length of random walks for each node are set as 80 and 40 respectively and the window-size is set as 10 for better performance.
\item \textbf{BoostNE} \cite{li2018multi}: It is an ensemble-based matrix factorization method for single-view network embedding. Compared with it, our method is extended to multi-view networks and has two decomposition strategies. The window-size and negative samples are both set as 5, which is same as our model. In addition, the number of stages is same as our proposed model to guarantee fairness.
\item \textbf{NetMF} \cite{qiu2018network}: According to the discussion in the original paper of BoostNE, NetMF is a special case of BoostNE when the stage is set as 1, so we use the same parameter settings as BoostNE.
\squishend

We compare our method with existing \textbf{heterogeneous network embedding baselines}:
\squishlist
\item \textbf{PTE} \cite{tang2015pte}: It is an extension method of LINE~\cite{tang2015line} for heterogeneous network embedding. Multi-view network is a type of heterogeneous network, thus we adopt PTE to generate its representation by joint training. The number of negative samples is set as 5.
\item \textbf{Metapath2Vec} \cite{dong2017metapath2vec}: It utilizes metapath-based random walk plus skip-gram model to capture the heterogeneous information in a network. We set the number and length of walks same as deepwalk, and the window size is also same as deepwalk. We perform experiment using one of all possible meta-paths at a time, and report the best result.
\squishend

We also compare our method with existing \textbf{multi-view based baselines}:
\squishlist
\item \textbf{NMF-con}: It applies NMF to decompose Eq.~\eqref{dw_matrix} to get a $d$ dimensional representation for each view then concatenates these representations from all $K$ views to generate a unified representation with $K*d$ dimensions. 
\item \textbf{MNE} \cite{zhang2018MNE}: It is a novel method for multiplex networks to transform all views into a unified representation via modeling multiple relationships jointly. We also follow the original paper to set the additional vectors' dimension as 10.
\item \textbf{MultiNMF} \cite{liu2013multi}: It is a multi-view learning algorithm based on co-clustering, which only preserves consensus information via joint NMF on Eq.~\eqref{dw_matrix} in all views. Multiple views are assigned with equivalent importance, which is same as the original paper but different from our proposed algorithm.
\item \textbf{TuckER~\cite{balazevic2019tucker}}: It is a powerful tensor factorization framework for representing relational data, we use the entity embeddings as the features for our experiment. The parameter settings are same as the original paper. 
\item \textbf{MTNE-C} \cite{xu2019multi}: It jointly learns the view-specific embedding and view-sharing embedding to preserve consensus and unique information in multi-view networks. We follow the default experimental settings in our experiments.  
\item \textbf{MVE} \cite{qu2017attention}: It applies deepwalk to get the representation for each view and then combines them by weights learned from attention mechanism. The setting of random walk and skip-gram model is same as Deepwalk, and we use the default hyper parameters in the original paper.
\squishend

For our proposed MNMF model, the parameter setting is listed in Table.~\ref{dataset}. Besides, we assign the total dimensions to each stage equally. Due to the fact that unique information is more than consensus information (see Sec.~\ref{consensus_vs_unique}), the first quarter of stages are responsible for preserving consensus information and the rest of stages aim to preserve unique information. The hyper-parameter $\gamma$ is searched from $\left\{0.01,0.05,0.1,0.5,5,10,50,100\right\}$ to obtain the best results.  We keep all methods with the same representation dimensions except NMF-con. For all methods, we use the learned embeddings as features to train linear classifiers for multiclass classification, and train one-vs-rest classifiers for multilabel classification. The ratio of training data varies from 30\% to 70\%. For node clustering task, the embeddings of nodes are treated as features and the number of clusters is same as the number of labels. To assign node into different clusters, we adopt KMeans \cite{macqueen1967some} to fit node embeddings. We report the best results among multiple views for single-view based baselines. All our experiments are conducted on a server with Intel(R) Xeon(R) CPU E5-2680, two Tesla P100 GPUs and 250Gb Memory. The experimental environment is Ubuntu 14.04.5 with CUDA 9.0. All methods are implemented with Tensorflow 1.12.0. To guarantee a fair comparison, we repeat each method 10 times and the average metrics are reported. The reference code is available at \url{https://github.com/yusonghust/MNMF/tree/master}.

\subsection{Performance Comparison (Q1)}

\begin{table}[t]
\topcaption{Multi-class classification results on AMiner\_small dataset w.r.t. Micro-f1(\%) and Macro-f1(\%).}
\centering
\setlength{\tabcolsep}{1.5mm}
\renewcommand\arraystretch{1.2}
\begin{tabular}{c|c|cc|cc|cc}
\hline
\multirow{2}{*}{\textbf{Category}}     & \multirow{2}{*}{\textbf{Methods}} & \multicolumn{2}{c|}{\textbf{0.3}}     & \multicolumn{2}{c|}{\textbf{0.5}}     & \multicolumn{2}{c}{\textbf{0.7}}     \\ \cline{3-8} 
                                      &                                          & \textbf{Micro}    & \textbf{Macro}    & \textbf{Micro}    & \textbf{Macro}    & \textbf{Micro}    & \textbf{Macro} \\ \hline \hline
\multirow{3}{*}{\textbf{Single-View}}  & Deepwalk                                 & 68.5              & 66.6              & 74.3              & 73.4              & 73.4              & 72.0              \\
                                      & NetMF                                    & 77.5              & 76.3              & 78.8              & 77.6              & 80.6              & 79.2              \\
                                      & BoostNE                                  & 78.3              & 77.5              & 79.3              & 78.8              & 81.8              & 81.2     \\ \hline
\multirow{2}{*}{\textbf{Heterogeneous}}& PTE                                      & 54.2              & 51.1              & 57.7              & 58.8              & 59.4              & 57.6              \\
                                      & Metapath2vec                             & 78.6              & 77.1              & 79.0              & 78.3              & 80.6              & 80.2        \\ \hline
\multirow{7}{*}{\textbf{Multi-View}}   & NMF-con                                  & 77.3              & 76.3              & 78.9              & 78.0              & 80.1              & 79.5              \\
                                      & MNE                                      & 73.0              & 71.9              & 78.7              & 77.2              & 78.8              & 77.2              \\
                                      & MultiNMF                                 & 48.5              & 41.4              & 48.6              & 42.2              & 49.5              & 41.9              \\
                                      & TuckER                                   & 79.6              & 77.8              & 80.2              & 78.8              & 80.9              & 80.1              \\
                                      & MTNE-C                                   & 57.2              & 53.9              & 58.6              & 55.2              & 59.5              & 55.4              \\
                                      & MVE                                      & 72.6              & 70.9              & 78.1              & 77.2              & 77.7              & 76.8              \\ \cline{2-8} 
                                      & MNMF                                     & \textbf{82.3}     & \textbf{80.8}     & \textbf{83.1}     & \textbf{82.0}     & \textbf{84.1}     & \textbf{82.9}      \\
                                      & (Gain)                                   & +2.7              & +3.0              & +2.9              & +3.2              & +2.3              & +1.7               \\ \hline
\end{tabular}
\label{aminer_cls}
\end{table}

\begin{table}[t]
\topcaption{Multi-class classification results on AMiner\_large dataset w.r.t. Micro-f1(\%) and Macro-f1(\%).}
\centering
\setlength{\tabcolsep}{1.5mm}
\renewcommand\arraystretch{1.2}
\begin{tabular}{c|c|cc|cc|cc}
\hline
\multirow{2}{*}{\textbf{Category}}     & \multirow{2}{*}{\textbf{Methods}} & \multicolumn{2}{c|}{\textbf{0.3}}     & \multicolumn{2}{c|}{\textbf{0.5}}     & \multicolumn{2}{c}{\textbf{0.7}}     \\ \cline{3-8} 
                                      &                                          & \textbf{Micro}    & \textbf{Macro}    & \textbf{Micro}    & \textbf{Macro}    & \textbf{Micro}    & \textbf{Macro} \\ \hline \hline
\multirow{3}{*}{\textbf{Single-View}}  & Deepwalk                                 & 62.0              & 61.1              & 62.5              & 61.8              & 63.3              & 62.7              \\
                                      & NetMF                                    & 73.3              & 70.9              & 74.2              & 72.0              & 74.3              & 72.3              \\
                                      & BoostNE                                  & 77.7              & 76.7              & 78.0              & 77.4              & 78.6              & 77.7              \\ \hline
\multirow{2}{*}{\textbf{Heterogeneous}}& PTE                                      & 40.1              & 38.0              & 41.4              & 39.2              & 42.1              & 39.4              \\
                                      & Metapath2vec                             & 74.1              & 71.7              & 76.4              & 73.8              & 77.2              & 75.3              \\ \hline
\multirow{7}{*}{\textbf{Multi-View}}   & NMF-con                                  & 72.9              & 70.7              & 75.3              & 73.4              & 76.2              & 74.6              \\
                                      & MNE                                      & 66.8              & 66.1              & 68.3              & 67.6              & 69.4              & 68.9              \\
                                      & MultiNMF                                 & 38.9              & 35.4              & 39.7              & 38.5              & 41.0              & 39.0              \\
                                      & TuckER                                   & 77.2              & 76.4              & 78.1              & 77.1              & 78.8              & 77.9              \\
                                      & MTNE-C                                   & 44.0              & 40.8              & 45.1              & 42.6              & 46.2              & 44.2              \\
                                      & MVE                                      & 65.9              & 65.3              & 66.6              & 66.1              & 67.7              & 67.6              \\ \cline{2-8} 
                                      & MNMF                                     & \textbf{79.2}     & \textbf{78.0}     & \textbf{79.4}     & \textbf{78.3}     & \textbf{79.9}     & \textbf{78.9}      \\
                                      & (Gain)                                   & +1.5              & +1.3              & +1.3              & +0.9              & +1.1              & +1.0               \\ \hline
\end{tabular}
\label{aminer_large_cls}
\end{table}

\begin{table}[t]
\topcaption{Multi-class classification results on PPI dataset w.r.t. Micro-f1(\%) and Macro-f1(\%).}
\centering
\setlength{\tabcolsep}{1.5mm}
\renewcommand\arraystretch{1.2}
\begin{tabular}{c|c|cc|cc|cc}
\hline
\multirow{2}{*}{\textbf{Category}}     & \multirow{2}{*}{\textbf{Methods}} & \multicolumn{2}{c|}{\textbf{0.3}} & \multicolumn{2}{c|}{\textbf{0.5}} & \multicolumn{2}{c}{\textbf{0.7}} \\ \cline{3-8} 
                                      &                                   & \textbf{Micro}  & \textbf{Macro}  & \textbf{Micro}  & \textbf{Macro}  & \textbf{Micro}  & \textbf{Macro}  \\ \hline \hline
\multirow{3}{*}{\textbf{Single-View}}  & Deepwalk                          & 10.1            & 7.4             & 11.1            & 8.0             & 13.8            & 9.9             \\
                                      & NetMF                             & 19.6            & 16.8            & 19.6            & 17.0            & 19.9            & 18.2            \\
                                      & BoostNE                           & 18.6            & 16.5            & 18.6            & 17.2            & 19.4            & 17.8             \\ \hline
\multirow{2}{*}{\textbf{Heterogeneous}}& PTE                               & 19.1            & 12.0            & 21.8            & 15.4            & 23.2            & 17.1            \\
                                      & Metapath2vec                      & 20.8            & 15.7            & 21.6            & 15.8            & 21.9            & 16.5              \\ \hline
\multirow{7}{*}{\textbf{Multi-View}}   & NMF-con                           & 21.0            & 16.7            & 22.3            & 17.2            & 22.7            & 17.9              \\
                                      & MNE                               & 15.1            & 13.2            & 16.6            & 13.1            & 18.1            & 15.6             \\
                                      & MultiNMF                          & 7.2             & 6.1             & 7.4             & 6.4             & 7.5             & 6.9             \\
                                      & TuckER                            & 19.7            & 16.8            & 22.3            & 17.1            & 23.5            & 18.0            \\ 
                                      & MTNE-C                            & 9.8             & 5.3             & 10.4            & 5.9             & 12.9            & 6.4             \\
                                      & MVE                               & 13.1            & 10.6            & 14.3            & 11.8            & 15.9            & 12.2            \\ \cline{2-8} 
                                      & MNMF                              & \textbf{22.2}   & \textbf{17.0}   & \textbf{25.1}   & \textbf{19.7}   & \textbf{27.6}   & \textbf{19.5}   \\
                                      & (Gain)                            & +1.2            & +0.2            & +2.8            & +2.5            & +4.1            & +1.3            \\ \hline
\end{tabular}
\label{ppi_cls}
\end{table}

\begin{table}[t]
\topcaption{Multi-label classification results on Flickr dataset w.r.t. Micro-f1(\%) and Macro-f1(\%).}
\centering
\setlength{\tabcolsep}{1.5mm}
\renewcommand\arraystretch{1.2}
\begin{tabular}{c|c|cc|cc|cc}
\hline
\multirow{2}{*}{\textbf{Category}}     & \multirow{2}{*}{\textbf{Methods}} & \multicolumn{2}{c|}{\textbf{0.3}} & \multicolumn{2}{c|}{\textbf{0.5}} & \multicolumn{2}{c}{\textbf{0.7}} \\ \cline{3-8} 
                                      &                                   & \textbf{Micro}  & \textbf{Macro}  & \textbf{Micro}  & \textbf{Macro}  & \textbf{Micro}  & \textbf{Macro}  \\ \hline \hline
\multirow{3}{*}{\textbf{Single-View}}  & Deepwalk                          & 34.7            & 9.8             & 34.4            & 8.1             & 34.0            & 7.3             \\
                                      & NetMF                             & 34.4            & 6.9             & 34.3            & 6.9             & 34.7            & 7.2            \\
                                      & BoostNE                           & 36.4            & 9.2             & 39.1            & 11.0            & 42.8            & 14.0            \\ \hline
\multirow{2}{*}{\textbf{Heterogeneous}}& PTE                               & 36.5            & 9.7             & 37.1            & 10.1            & 38.7            & 10.7            \\
                                      & Metapath2vec                      & 35.8            & 9.4             & 36.6            & 9.8             & 38.2            & 10.3               \\ \hline                                   
\multirow{7}{*}{\textbf{Multi-View}}   & NMF-con                           & 36.6            & \textbf{12.1}   & 38.3            & \textbf{12.6}   & 43.0            & 14.0            \\
                                      & MNE                               & 34.8            & 7.3             & 37.3            & 9.3             & 40.5            & 11.8            \\
                                      & MultiNMF                          & 35.3            & 7.7             & 37.8            & 9.7             & 40.8            & 12.1             \\
                                      & TuckER                            & 35.9            & 9.7             & 37.4            & 10.2            & 42.1            & 13.9            \\
                                      & MTNE-C                            & 36.4            & 8.3             & 37.0            & 8.0             & 39.8            & 12.5            \\
                                      & MVE                               & 36.3            & 9.6             & 35.8            & 8.5             & 37.2            & 9.2             \\ \cline{2-8} 
                                      & MNMF                              & \textbf{37.4}   & 9.8             & \textbf{39.5}   & 11.3            & \textbf{43.7}   & \textbf{14.5}   \\
                                      & (Gain)                            & +0.8            & -2.3            & +0.4            & -1.3            & +0.7            & +0.5            \\ \hline
\end{tabular}
\label{flickr_cls}
\end{table}

\subsubsection{Node Classification} The experimental results are listed in Tables~\ref{aminer_cls},~\ref{aminer_large_cls},~\ref{ppi_cls} and ~\ref{flickr_cls}. We compare our methods with three categories of baselines and show the performance gain over the best baseline. Some of the observations are listed as follows:

\noindent (1) For single-view based baseline approaches, the performance of deepwalk is barely satisfying in all datasets. But the matrix factorization based methods, i.e. NetMF and BoostNE, are much better, since explicitly factorizing the deepwalk closed-form matrices captures more network information. Thus it is better to choose matrix factorization for network embedding.

\noindent (2) Compared with state-of-the-art single-view baselines, exitsing methods for multi-view networks are not superior to them. Although multi-view based methods are much better than classic single-view network embedding method, i.e. Deepwalk, they only achieve similar performance with NetMF and BoostNE on three multi-view networks. Since existing multi-view based methods fail to capture the complex correlations between different views, they tend to generate sub-optimal embeddings for multi-view networks through linear combination strategies, such as average and concatenation. Thus it is reasonable to find that multi-view based baselines are not always better than single-view baselines.

\noindent (3) The heterogeneous network embedding methods are also not comparable to MNMF model. In summary, our model achieves a sustainable performance gain over PTE and Metapath2vec in all datasets, suggesting that MNMF is more suitable to capture the information in multi-view networks.  Since there exists only one type of node in the network, which leads to the semantic information of the meta-paths greatly reduced. Therefore, Metapath2vec is also not suitable for embedding multi-view networks. The heterogeneity of PTE comes from the text networks. In essence, the original input of PTE is a word, and the output is the embedding of each word, not multiple types of relationships between objects, thus PTE is also not suitable for representing heterogeneous edges in networks. 

\noindent (4) Compared with all three types of baselines, MNMF model always achieves better performance on four datasets with the variation of training ratio since it learns a robust representation for multi-view networks. By utilizing multi-stage NMF, the MNMF model preserves network information in different granularities with a small approximation error, thus it has around 5\% performance gain than naive matrix factorization methods on two AMiner datasets, such as NMF-con. Besides, it preserves consensus as well as unique information in multiple views, making it significantly outperforming those baselines that only preserve consensus information, such as MultiNMF. The tensor factorization method, i.e. TuckER, is also not comparable to our method since it still ignores the unique information of each view. Note that even MTNE-C also tries to preserve consensus and unique information but there still exists a gap with our model, since it fails to take high-order proximity into consideration. As a result, we conclude that the MNMF model is able to capture more complete network information.

\noindent (5) As we can see, NMF-con is almost superior to all baselines in our experimental results, and it even achieves better macro-f1 score than the proposed MNMF model in Flickr dataset. The reason goes down to the following points. First, compared with NMF-con, all baselines and MNMF only get a $d$ dimensional representation while NMF-con gets a $k*d$ dimensional representation. In general, the higher dimensions of network embedding, the better quality it has (see Fig~\ref{parameter_dims}). Thus NMF-con outperforms other baselines, but it is still not better than MNMF in most cases, which illustrates the effectiveness of MNMF. Second, the NMF-con is inefficient for network embedding, since it does not always get a low dimensional representation. For example, in PPI dataset containing six views, it would obtain a relative high ($6*256$) dimensional network embedding, which violates the original intention of network embedding. Compared with that, our MNMF model carefully assigns total dimensions into each stage, making the final representations obtained by concatenation still be low dimensional. In each stage, although a low-dimensional matrix may not be able to fit the residual matrix perfectly, but the approximation error will be regarded as another residual matrix to be decomposed in the next stage. With the stages go deep, our MNMF model even achieves lower approximation error than one-round matrix factorization, which will be illustrated in part ~\ref{whyh}.

\begin{table}[t]
\topcaption{Node clustering results w.r.t. normalized (adjusted) mutual information score.}
\centering
\setlength{\tabcolsep}{1.5mm}
\renewcommand\arraystretch{1.2}
\begin{tabular}{c|c|cc|cc|cc}
\hline
\multirow{2}{*}{\textbf{Category}}     & \multirow{2}{*}{\textbf{Method}} & \multicolumn{2}{c|}{\textbf{AMiner\_small}} & \multicolumn{2}{c|}{\textbf{AMiner\_large}} & \multicolumn{2}{c}{\textbf{PPI}} \\ \cline{3-8} 
                                      &                                  & \textbf{NMI}         & \textbf{AMI}        & \textbf{NMI}        & \textbf{AMI}        & \textbf{NMI}    & \textbf{AMI}    \\ \hline \hline
\multirow{3}{*}{\textbf{Single-view}}  & Deepwalk                         & 0.242                & 0.241               & 0.253               & 0.252               & 0.140           & 0.064           \\
                                      & NetMF                            & 0.343                & 0.341               & 0.209               & 0.205               & 0.292           & 0.178           \\
                                      & BoostNE                          & 0.368                & 0.363               & 0.330               & 0.320               & \textbf{0.300}  & 0.186           \\ \hline
\multirow{2}{*}{\textbf{Heterogeneous}}& PTE                              & 0.282                & 0.281               & 0.107               & 0.105               & 0.187           & 0.104           \\
                                      & Metapath2vec                     & 0.299                & 0.297               & 0.226               & 0.225               & 0.199           & 0.200               \\ \hline
\multirow{7}{*}{\textbf{Multi-View}}   & NMF-con                          & 0.386                & 0.382               & 0.175               & 0.163               & 0.266           & 0.208           \\
                                      & MNE                              & 0.295                & 0.291               & 0.275               & 0.262               & 0.192           & 0.105           \\
                                      & MultiNMF                         & 0.208                & 0.152               & 0.100               & 0.100               & 0.200           & 0.111           \\
                                      & TuckER                           & 0.382                & 0.377               & 0.321               & 0.311               & 0.292           & 0.220           \\
                                      & MTNE-C                           & 0.174                & 0.173               & 0.107               & 0.106               & 0.184           & 0.097           \\
                                      & MVE                              & 0.309                & 0.307               & 0.274               & 0.269               & 0.211           & 0.137           \\ \cline{2-8} 
                                      & MNMF                             & \textbf{0.409}       & \textbf{0.408}      & \textbf{0.330}      & \textbf{0.324}      & 0.291           & \textbf{0.231}  \\
                                      & (Gain \%)                         & +2.3                 & +2.6                & +0.0                & +0.4                & -0.9            & +1.1            \\ \hline
\end{tabular}
\label{cluster}
\end{table}

\subsubsection{Node Clustering}
The node clustering results on AMiner and PPI networks can be observed from Table.~\ref{cluster}. In AMiner networks, our MNMF model achieves best performance over all baselines. Compared with the performance of baselines, MNMF has 2.3\% and 2.6\% gain in terms of NMI and AMI respectively in the small network, which indicates the embedding learned by MNMF model can well preserve the information in original network. For the large network, existing methods such as MultiNMF, NMF-con and MTNE even cause degradation of performance compared with single view based methods. MNMF model still outperforms all baselines but the performance gain is not so noticeable. Apart from AMiner network, as the view number is relative large in PPI network, it is challenging to integrate information in multiple views. Thus exitsing multi-view based methods fail to get satisfactory result, but MNMF model still obtains the best AMI score in PPI network. This phenomenon shows that MNMF model is more effective than existing works when learning representations for complex multi-view networks. 

\begin{figure}[t]
    \centering
    \subfigure[Citation View]{
    \includegraphics[scale=0.0425]{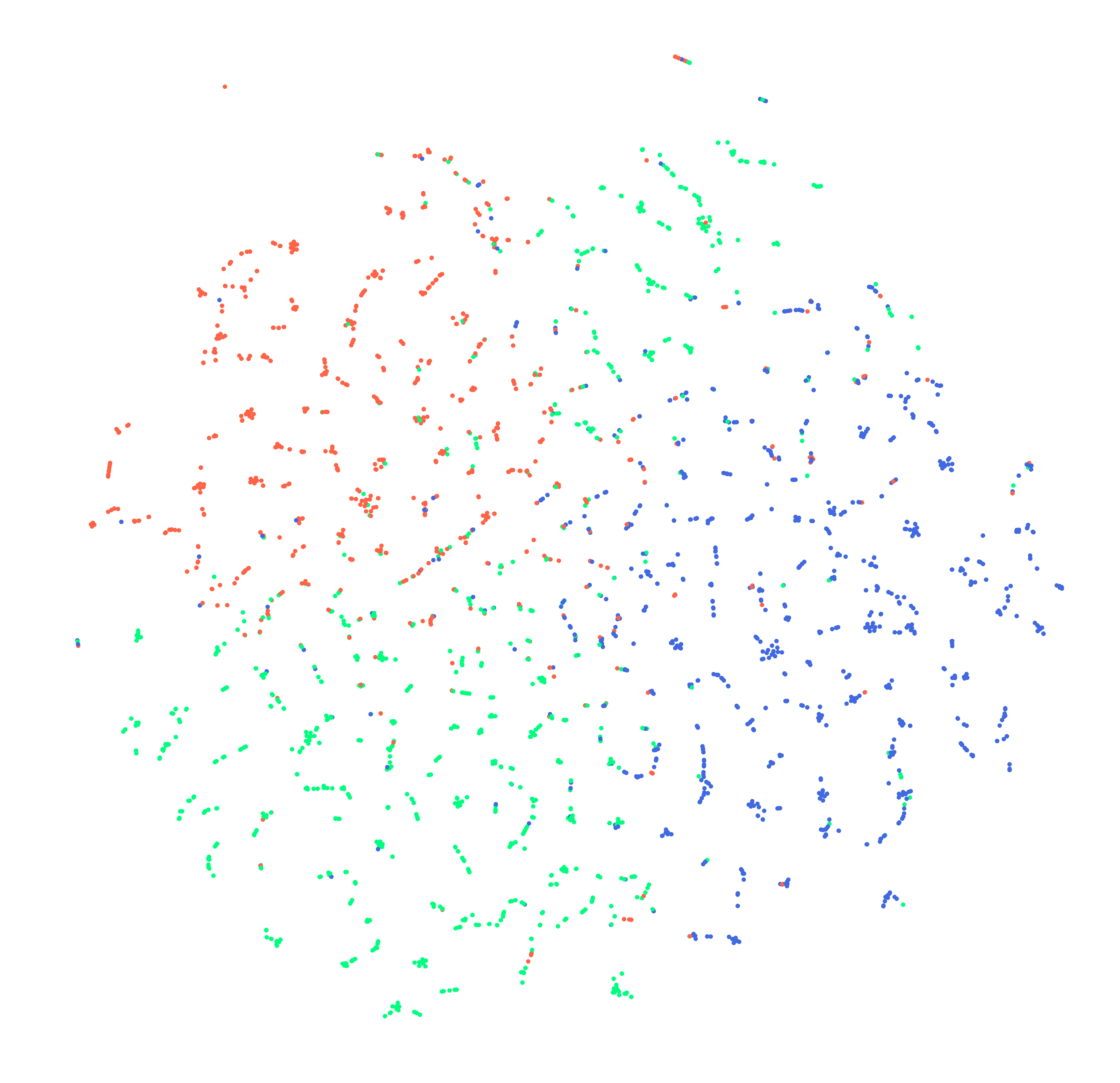}
    }
    \subfigure[Coauthor View]{
    \includegraphics[scale=0.0425]{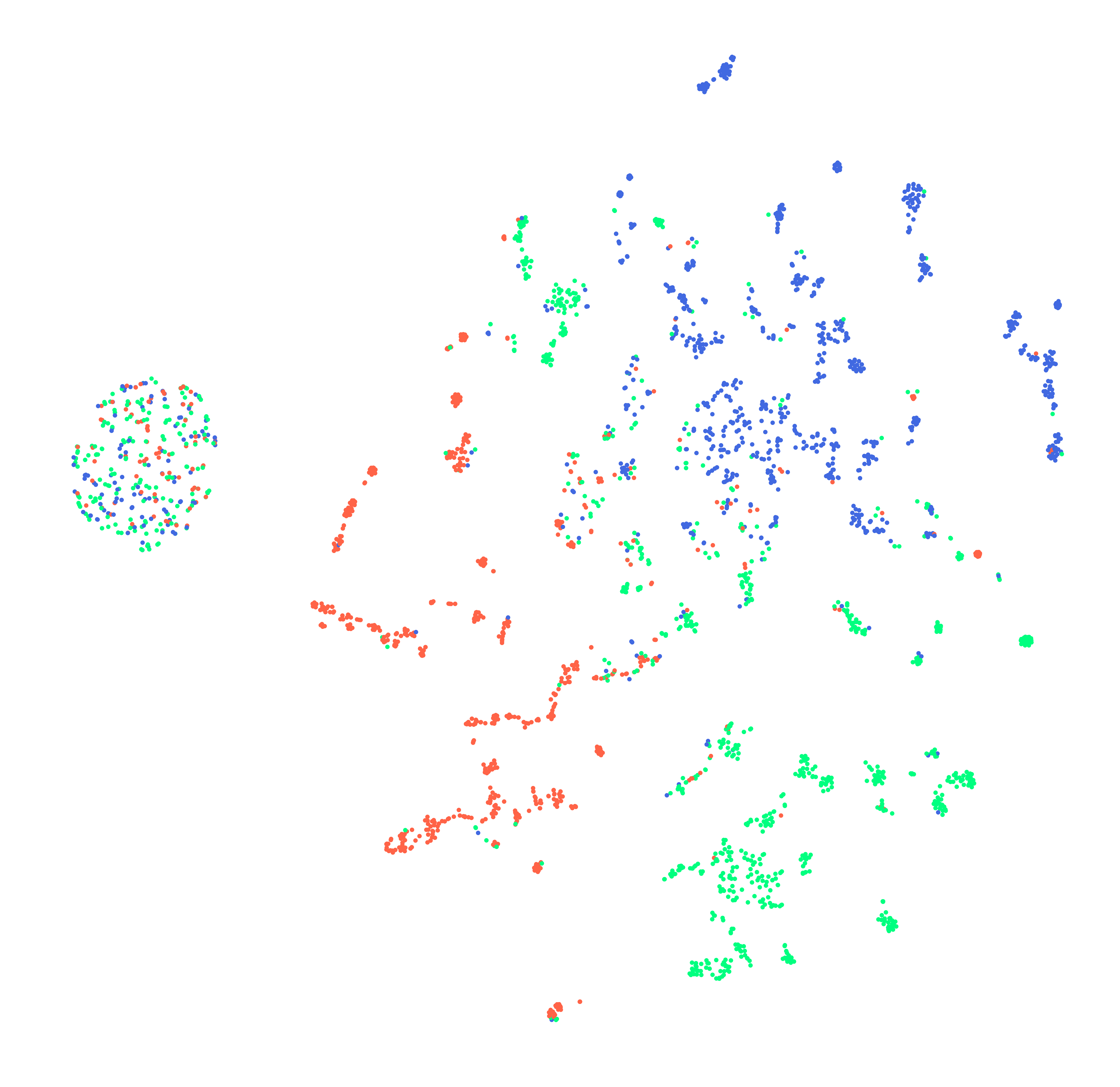}
    }
    \subfigure[Similarity View]{
    \includegraphics[scale=0.0425]{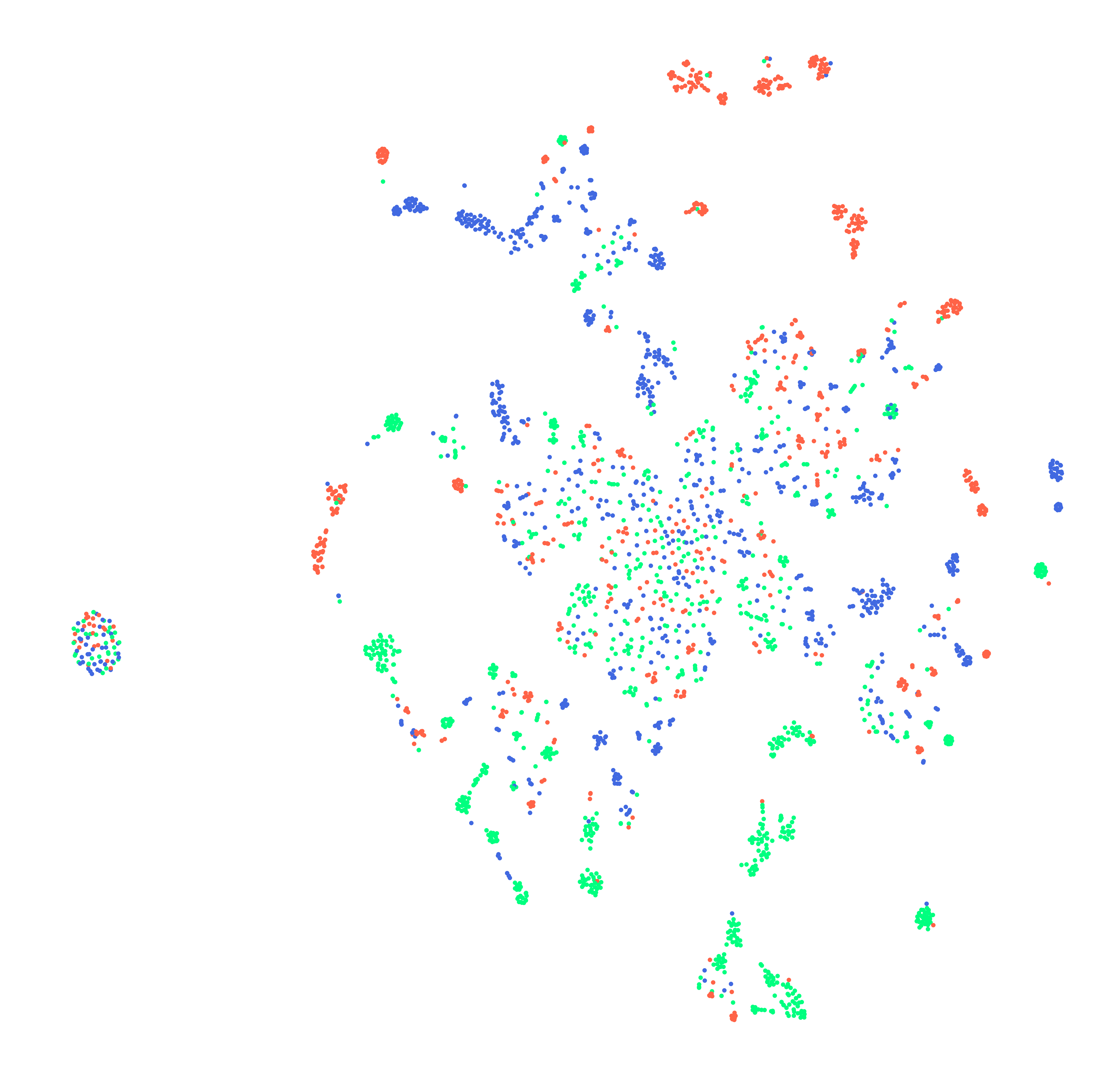}
    }
    \subfigure[stages=2]{
    \includegraphics[scale=0.0425]{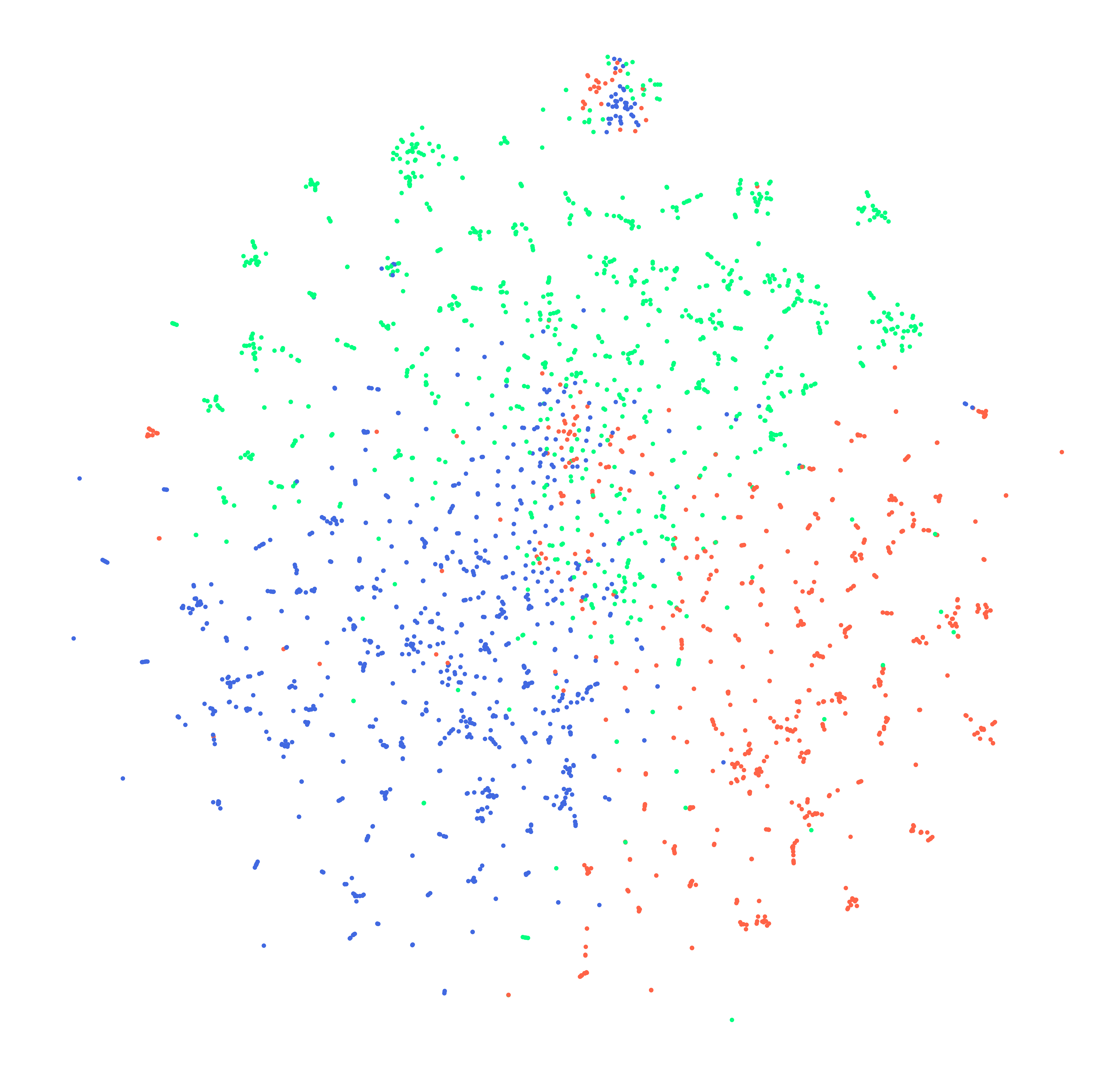}
    }
    \subfigure[stages=4]{
    \includegraphics[scale=0.0425]{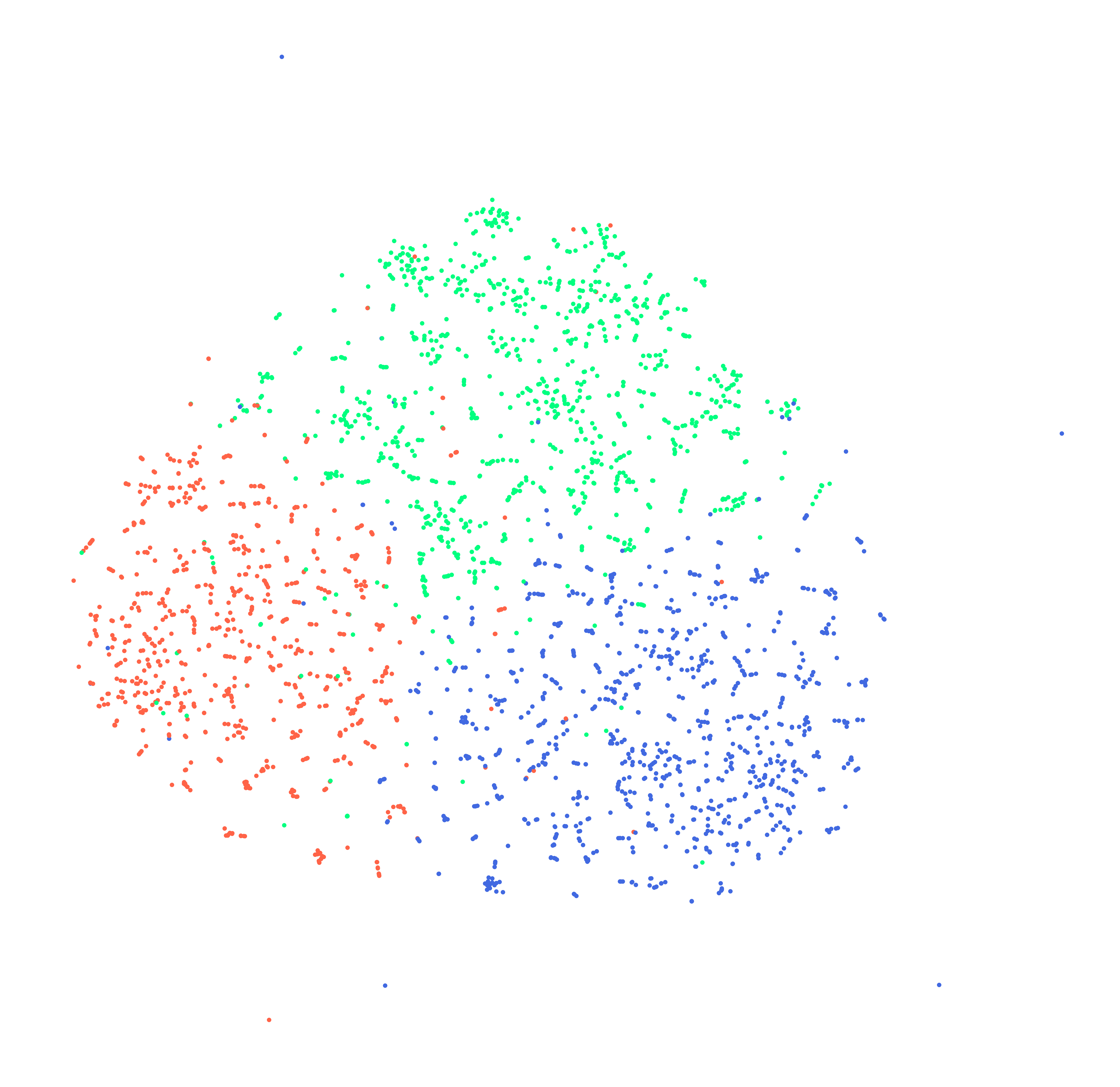}
    }
    \subfigure[stages=8]{
    \includegraphics[scale=0.0425]{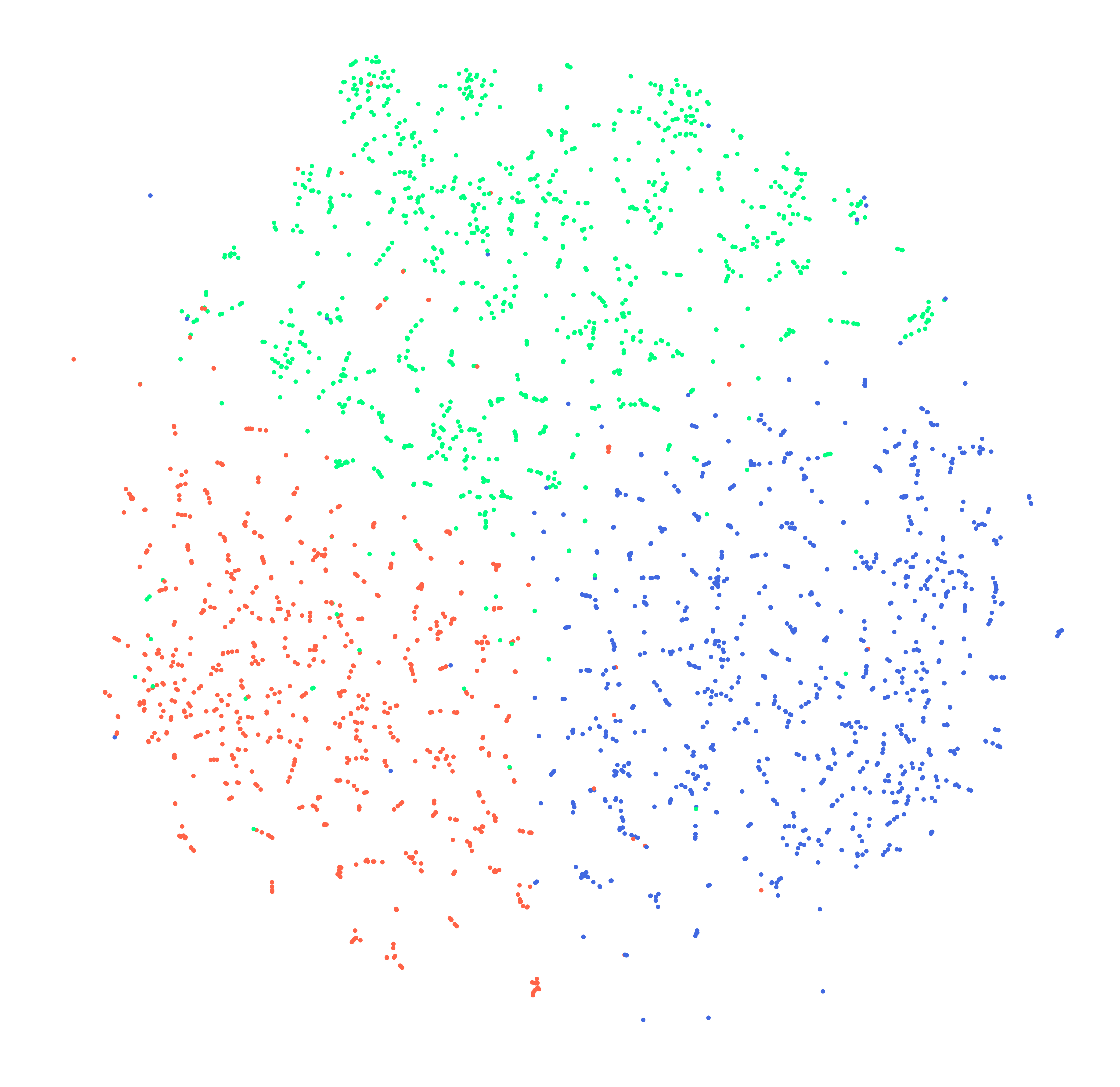}
    }
    \caption{2d t-SNE Visualization for AMiner\_small Dataset. Each point represents a node and colors represent labels. red : computational linguistics; blue : computer graphics; green : theoretical computer science.}
    \label{vis}
\end{figure}

\subsubsection{Network Visualization} We project the learned representations for AMiner dataset into a 2 dimensional vectors with t-SNE method \cite{maaten2008visualizing}. Fig~\ref{vis} shows the visualization results by the deepwalk matrix of each single view (a, b, c) and different stages' visualization results (d, e, f) of the MNMF model, and each color represents a research field. We can see that our multi-view based results are much better than single view based results. Citation view fails to clearly separate the researchers from three research fields. Both coauthor and text similarity views form an independent community that does not actually exist. This observation proves that it is necessary to learn a representation by utilizing multiple views since single view is not enough to describe the network precisely. By carefully learning a robust representation from multiple views, our method can well distinguish three research fields when the number of stages is large enough. But we still notice that if the stage is too low (e.g. stage = 2 and 4) the visualization results are still not so satisfactory. This phenomenon is due to that low stage factorized matrices fail to preserve complete information of original matrices with a large approximation error. With the stage increasing, we see a significant improvement in the visible results. Based on the above discussion, we can conclude that our method can better capture the information of multi-view network through multi-stage matrix factorization.

\begin{figure}[t]
    \centering
    \subfigure[\scriptsize{AMiner\_small}]
    {
    \includegraphics[scale=0.205]{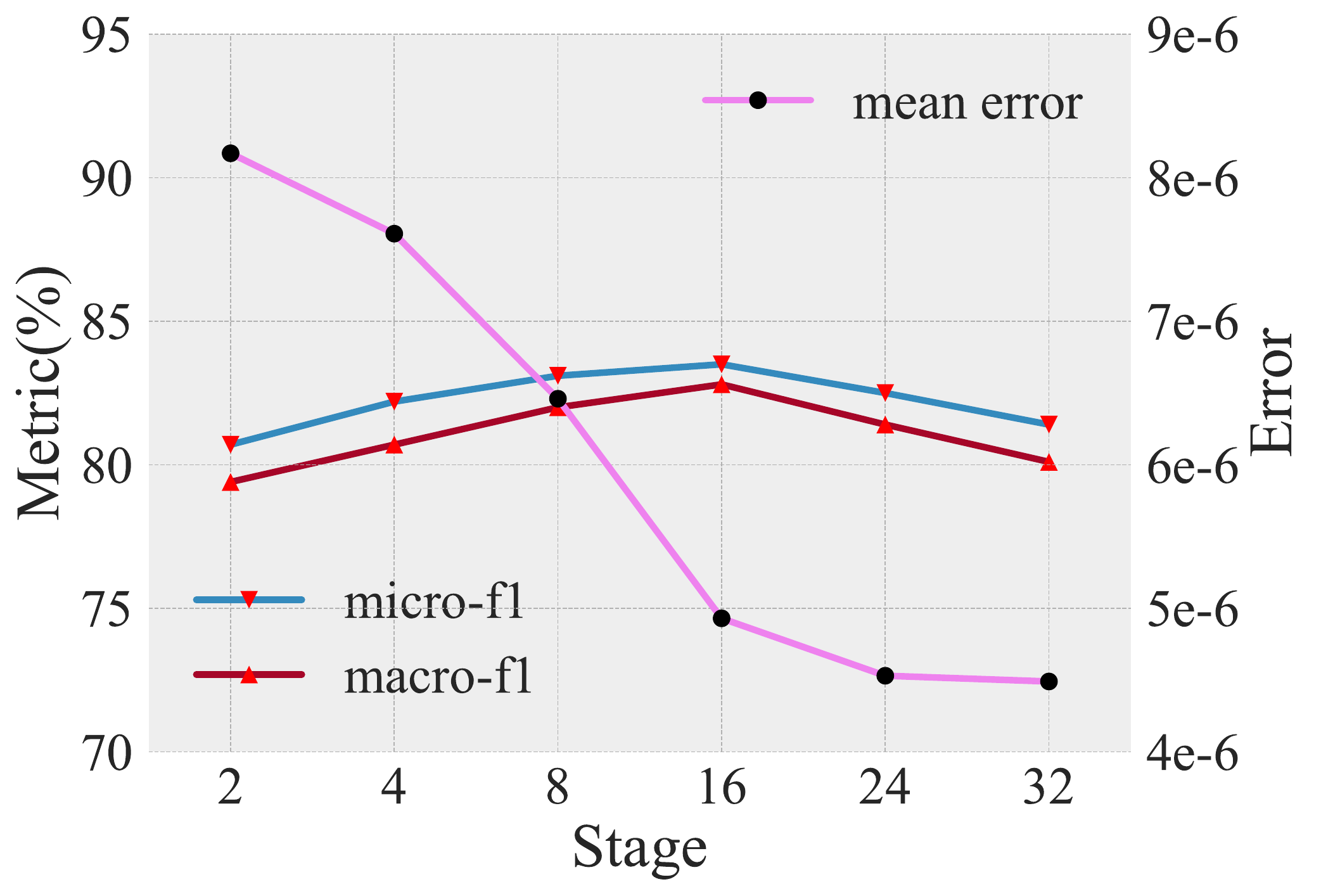}
    }
    \subfigure[\scriptsize{PPI}]
    {
    \includegraphics[scale=0.205]{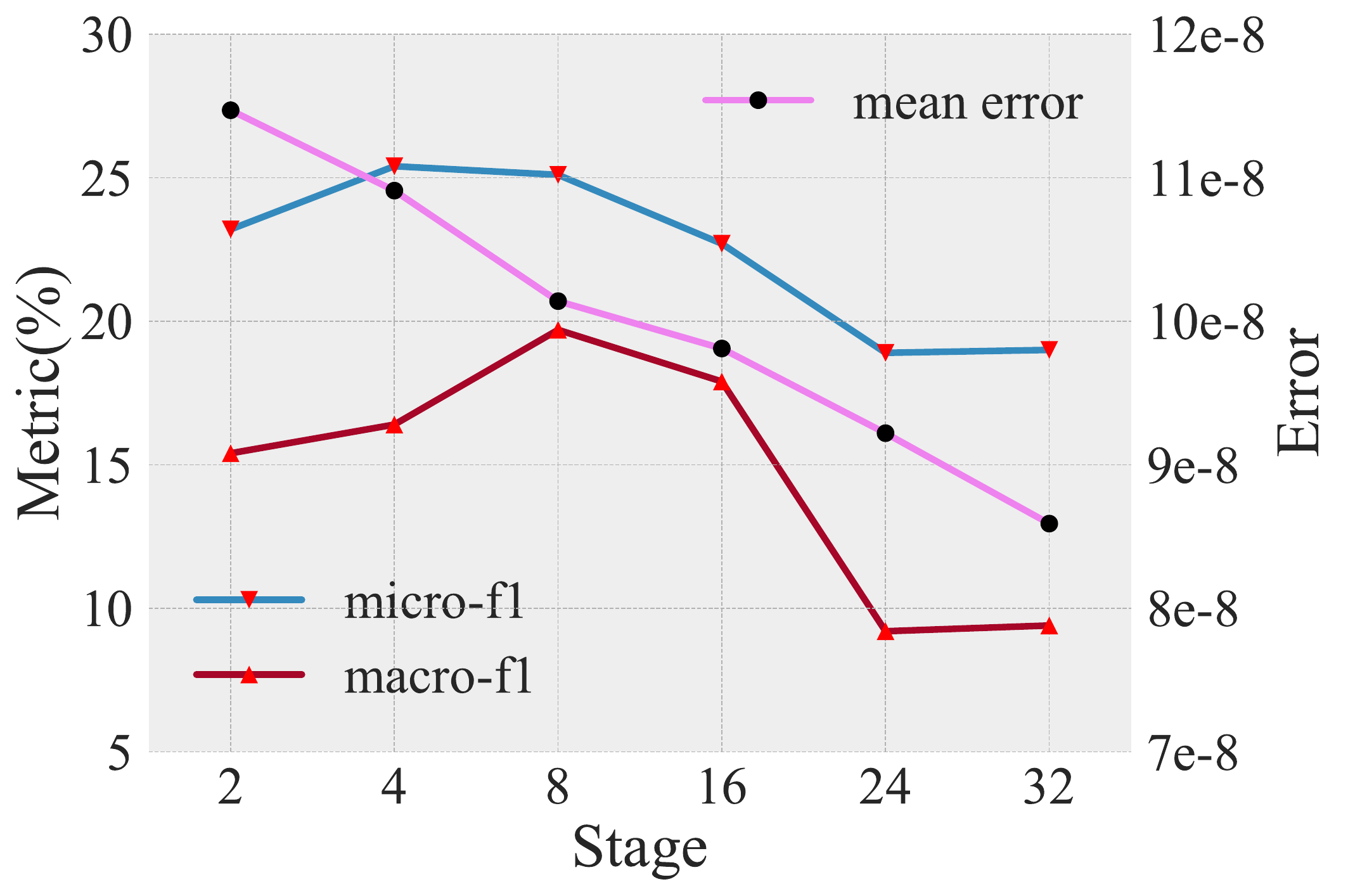}
    }
    \subfigure[\scriptsize{Flickr}]
    {
    \includegraphics[scale=0.205]{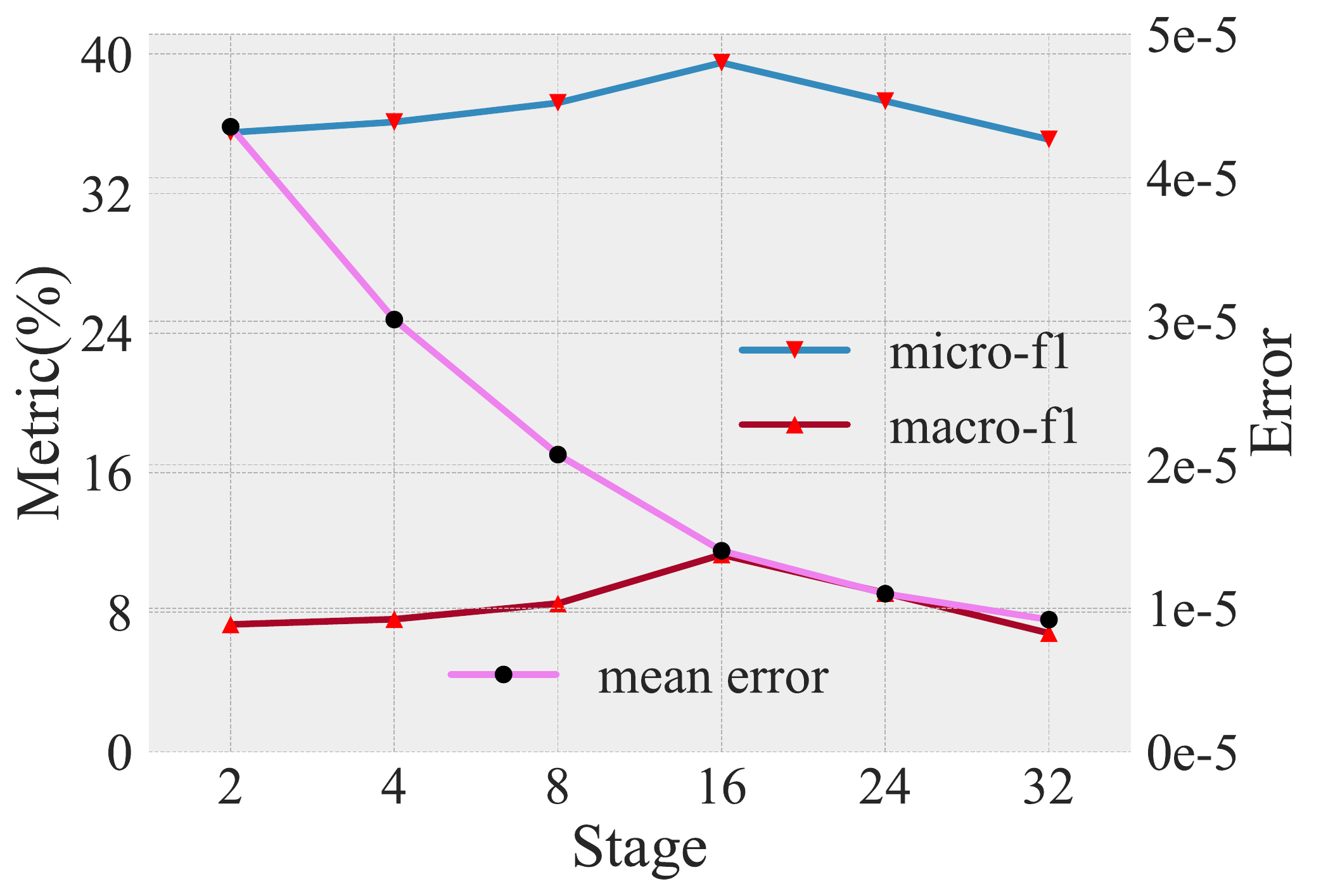}
    }
    \caption{Node classification performance and mean approximation error of MNMF model w.r.t. the number of stages.}
    \label{hierarchy}
\end{figure}

\begin{figure}[t]
    \centering
    \subfigure[Consensus and unique information between views]{
    \includegraphics[scale=0.35]{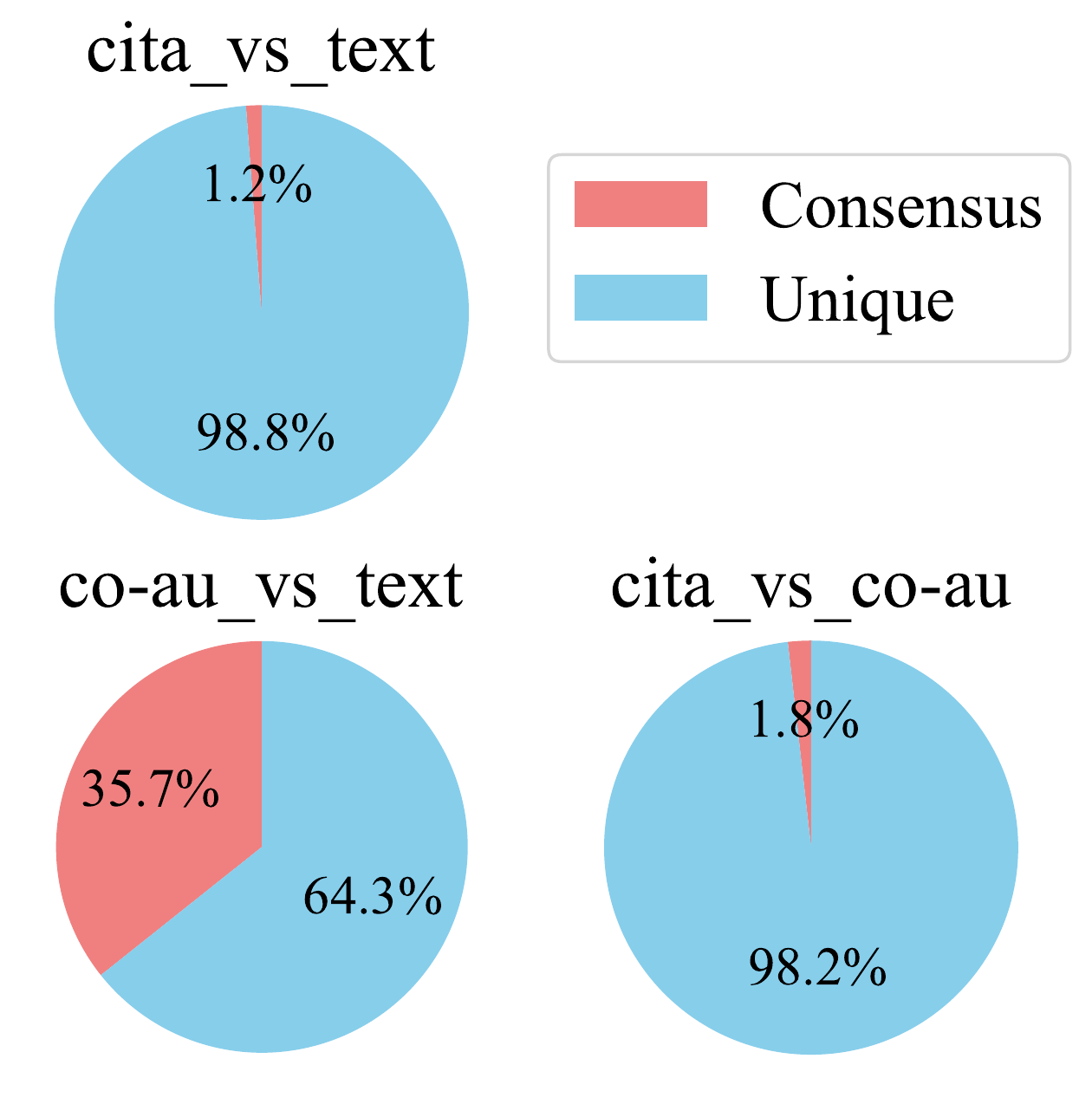}
    }
    \subfigure[Ablation test]{
    \includegraphics[scale=0.35]{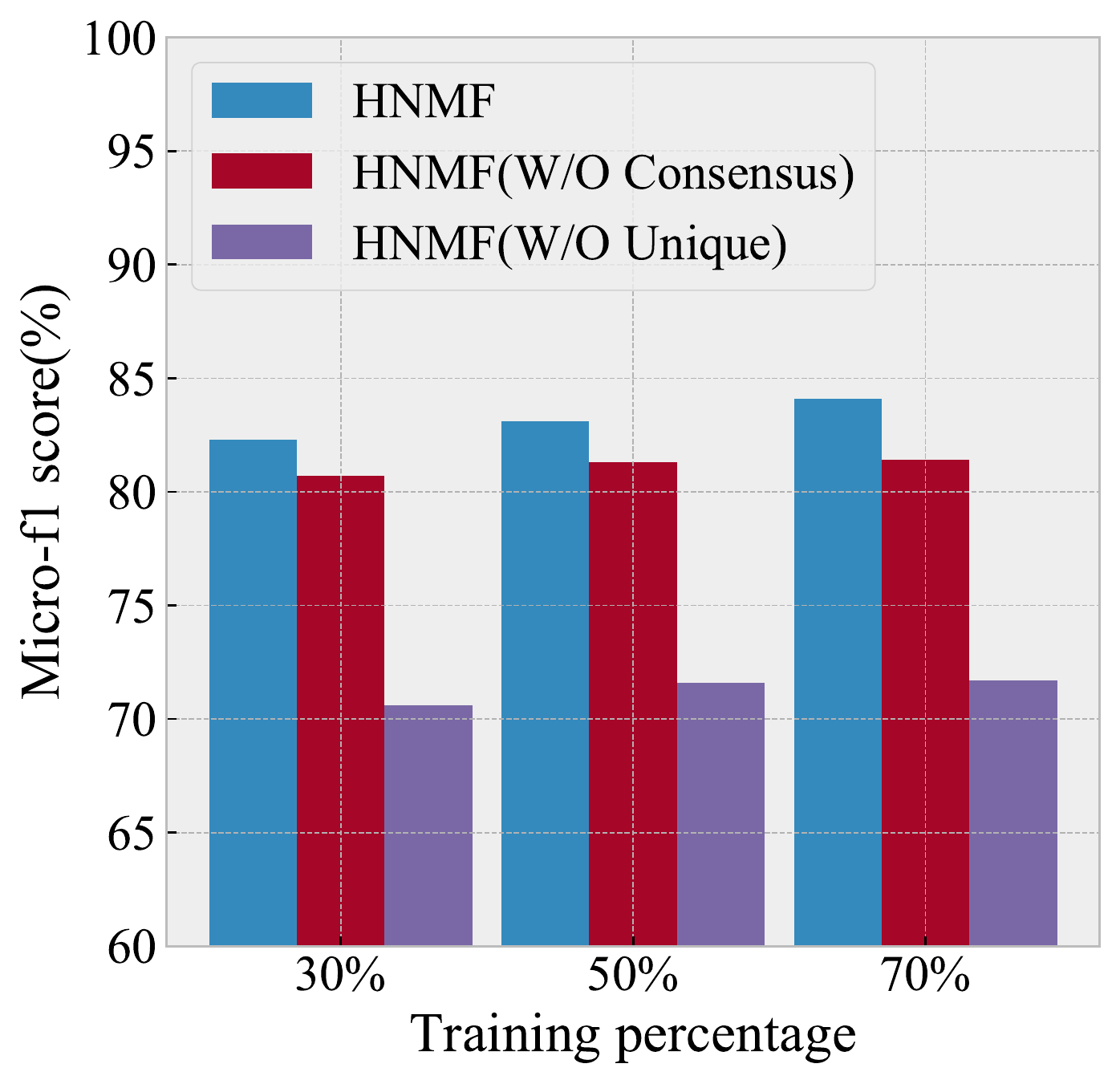}
    }
    \caption{Case study on AMiner\_small Network. \small cita: citation view; co-au: co-authorship view; text: text similarity view.}
    \label{ablation}
\end{figure}

\subsection{Why Multi-stage Matrix Factorization? (Q2)} \label{whyh}
Although the visualization results have indicated that multi-stage matrix factorization leads to a better performance with the stage going deeper, the empirical results cannot give an intuitive explanation of this phenomenon. Here we try to investigate the underlying reason from the  perspective of approximation error using experiments on AMiner\_small, PPI, and Flickr datasets with training rate as $0.5$. From Fig.\ref{hierarchy} (a), (b) and (c), we can see the classification metrics increase obviously when the stage goes deep at the beginning, at the same time the mean approximation error also reduces sharply. According to the matrix theory, matrix factorization approaches are built on a condition: the matrix $\bm{X} \in \mathbb{R}^{N*N}$ to be decomposed should be of low-rank, i.e. its rank should be no more than the given dimension $d$ , so that the one-round matrix factorization process is possible to generate a smaller dimensional matrix $\bm{U} \in \mathbb{R}^{N*d}$, which fits all the information of the original matrix. This demand is difficult to meet in practice especially for multi-view networks, since it is strait to ensure that each view satisfies the low-rank condition. Thus it is common that the one round matrix decomposition process has a large error, so we perform matrix factorization process in a forward manner on the residual matrices. When the stage goes deeper the residual matrices will become sparser, the approximation error would reduce gradually. As a result, our method achieves better performance than compared baselines. 

Apart from that, we still find that if the stages go very deep, such as 24 and 32, the node classification performance declines pretty clearly in all datasets. We analyze the inherent reason as follows: Observing from Fig.~\ref{hierarchy}, at the beginning, with the stage going deep, the performance will increase gradually. Then, at a certain stage (16 for AMiner\_small and Flickr), if we further increase the stages, the reduction of mean approximation error is very limited. All the above discussion can reflect that the multi-stage structure is helpful only when the number of stages are selected carefully.

\begin{table}[ht]
\centering
\caption{Tune the number of stages for MNMF.}
\setlength{\tabcolsep}{1.5mm}
\renewcommand\arraystretch{1.2}
\label{tune-stage}
\begin{tabular}{c|ccc|ccc}
\hline
\multirow{2}{*}{\textbf{Method}} & \multicolumn{3}{c|}{\textbf{AMiner\_small}}          & \multicolumn{3}{c}{\textbf{PPI}}                      \\ \cline{2-7} 
                                 & \textbf{0.3}  & \textbf{0.5}  & \textbf{0.7}  & \textbf{0.3}  & \textbf{0.5}  & \textbf{0.7}  \\ \hline \hline
Baseline                         & 81.6          & 82.6          & 83.5          & 19.6          & 22.4          & 23.6          \\
Grid Search                      & \textbf{82.1} & \textbf{83.2} & \textbf{84.2} & \textbf{20.3} & \textbf{23.7} & \textbf{24.8} \\
Bayesian Search                  & 82.0          & 83.1          & 84.1          & 20.3          & 23.6          & 24.6          \\
Early Stopping                   & 81.8          & 82.9          & 83.9          & 20.1          & 22.8          & 24.2          \\ \hline
\end{tabular}
\end{table}

In order to decide a suitable number of stages, we suggest three simple yet effective methods to determine the number of stages. First, we can do a grid parameter search, we can determine several values to try, and then traverse all the parameter values to choose the best number of stage. The advantage of this method is simple and reliable if we try enough candidate values, but it may only be feasible in some small datasets due to the bad time cost. However, motivated by the advantages of auto machine learning, we can also choose Bayesian optimization~\cite{snoek2012practical}, which takes into account the corresponding experimental results of different parameters to save more time. Second, motivated by early-stopping, we can also choose a threshold to stop the multi-stage process in advance. For example, we can perform an evaluation on a small part of samples to determine whether stopping the multi-stage process when the metric is lower than last stage. To validate these two method, we also perform an experiment on AMiner\_small dataset and PPI dataset and the node classification results~(average of micro-F1 score and macro-F1 score) are shown in Table.~\ref{tune-stage}. We set the search space of the number of stages is from 2 to 64 and choose $10\%$ samples as a validation set. As we can see, the grid search achieves the best results since it traverses all possible values; the bayesian search and early stopping also achieve better performance than baseline.

\subsection{Analysis of View Weights (Q2)} \label{view_w}
In this subsection, we explore the view weights learned by the MNMF model to validate the effectiveness of the weighting strategy. We select two datasets as examples to check the view weights~(i.e. $\alpha$) with $\gamma$ as 2.  As shown in Table~\ref{view_weight_2}, the value of $\gamma$ has a noticeable impact on the learned view weights, which reflects the importance of the weighting strategy. As we can see, the differences among weights of different view are relatively large in the beginning, for example, on AMiner\_small dataset, the weight of the first view is significantly higher than the second view, which indicates that our model will focus on the first view. With the stages going deep, the weight of each view will gradually tends to close to each other, so we see that the weight of the first and third view gradually decreases, and the weight of the second view gradually increases. This phenomenon may be due to that with the increasing of the stage number, more and more information has been retained, thus the difference between two views is getting smaller, leading their weights being closer and closer.  In accordance with that, we also observe the similar results on PPI dataset. As a result, we can conclude that the model can learn the importance of multiple views through assigning different weight to each view.

\begin{table}[t]
\centering
\topcaption{View weights at each stage}
\setlength{\tabcolsep}{1.0mm}
\renewcommand\arraystretch{1.2}
\label{view_weight_2}
\begin{tabular}{c|c|c|c|c|c|c|c|c|c}
\hline
\multicolumn{2}{c|}{\textbf{Dataset}}    & stage 1 & stage 2 & stage 3 & stage 4 & stage 5 & stage 6 & stage 7 & stage 8 \\ \hline \hline
\multirow{3}{*}{\textbf{AMiner\_small}}  & view 1 & 0.4510  & 0.4422  & 0.4343  & 0.4279  & 0.4224  & 0.4172  & 0.4120  & 0.4077  \\ 
                                         & view 2 & 0.1815  & 0.1925  & 0.2031  & 0.2111  & 0.2181  & 0.2249  & 0.2326  & 0.2380  \\
                                         & view 3 & 0.3675  & 0.3653  & 0.3626  & 0.3610  & 0.3595  & 0.3579  & 0.3554  & 0.3542  \\ \hline \hline
\multirow{6}{*}{\textbf{PPI}}            & view 1 & 0.0564  & 0.0631  & 0.0700  & 0.0769  & 0.0820  & 0.0870  & 0.0931  & 0.0985  \\
                                         & view 2 & 0.2360  & 0.2245  & 0.2157  & 0.2080  & 0.2010  & 0.1959  & 0.1903  & 0.1849  \\
                                         & view 3 & 0.1683  & 0.1836  & 0.1961  & 0.2056  & 0.2145  & 0.2187  & 0.2219  & 0.2254  \\
                                         & view 4 & 0.0879  & 0.0925  & 0.0952  & 0.0994  & 0.1042  & 0.1097  & 0.1155  & 0.1216  \\
                                         & view 5 & 0.2403  & 0.2284  & 0.2188  & 0.2105  & 0.2032  & 0.1971  & 0.1911  & 0.1854  \\
                                         & view 6 & 0.2111  & 0.2080  & 0.2043  & 0.1995  & 0.1951  & 0.1916  & 0.1881  & 0.1842  \\ \hline                    
\end{tabular}
\end{table}

\subsection{Consensus or Unique Information? (Q2)} \label{consensus_vs_unique}
Here we evaluate the contribution of preserving the consensus and unique information in multiple views respectively. Firstly, we perform a statistical analysis on AMiner\_small dataset. As shown in Fig.~\ref{ablation}(a), we follow the method in \cite{shi2018mvn2vec,wang2020rgae} to study the consensus and unique information between different views.  For a pair of views, each node has two unique neighbors set in these two views. If the Jaccard coefficient between its two neighbors sets is greater than 0.5, then we think the node carries consensus information in these two views, otherwise carries unique information. The Jaccard coefficient describes the overlap of the neighborhood set for a node in two different views, which is also similar to the 2-nd order proximity in LINE~\cite{tang2015line}. In LINE, the 2-nd order proximity between two nodes is defined as the degree of overlap between the neighborhood sets of the two nodes, and a large value indicates the two nodes should be close to each other, i.e. carrying similar information. Intuitively, if a node has a high degree of neighbor coincidence in two different views, it indicates that the information of the two view is very consistent.  As we can see, there exist noticeable proportion of nodes in co-authorship and text similarity views carrying consensus information while other pairs of views are opposite. Thus we conclude that it might be inappropriate to simply preserve consensus or unique information for multi-view network embedding since the view relationships are complex. 

Then we perform an ablation test on node classification task to verify our conclusion. From Fig.~\ref{ablation}(b), we can see the Micro-f1 score of MNMF model declines after removing any type of information. If we remove consensus information from MNMF model, the performance would drop around 2\%. Compared with that,  without unique information, it brings even worse impact on MNMF model, resulting in about 12\% performance loss in terms of Micro-f1 score. This phenomenon is corresponding to the statistical analysis, different views would carry more unique information than consensus information since network data have high variance. However, we cannot ignore the consensus information since it helps to build connections between different views, thus we should preserve both consensus and unique information to learn robust representations for multi-view networks.

\begin{figure}[t]
    \centering
    \subfigure[\scriptsize{AMiner\_small}]
    {
    \includegraphics[scale=0.225]{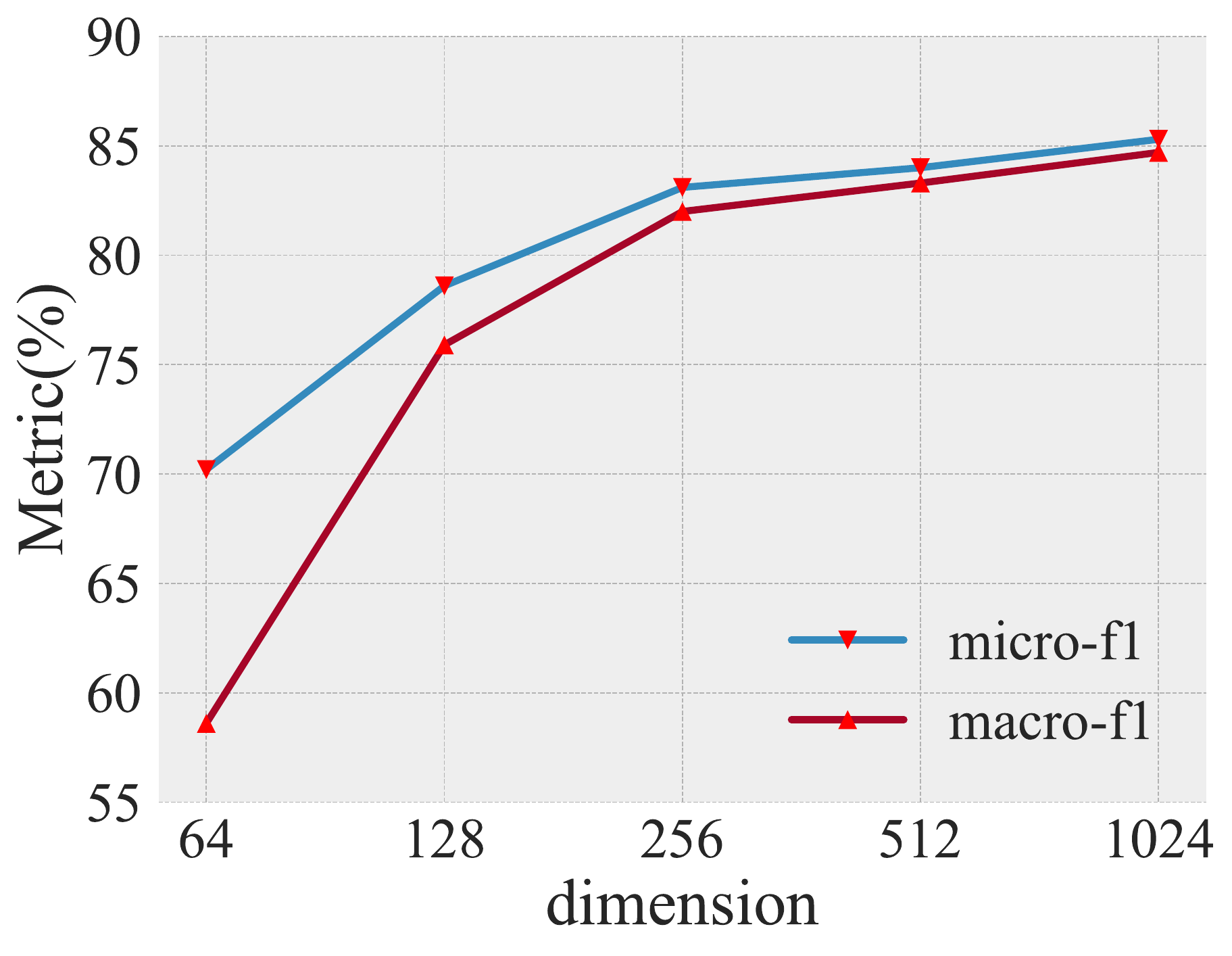}
    }
    \subfigure[\scriptsize{PPI}]
    {
    \includegraphics[scale=0.225]{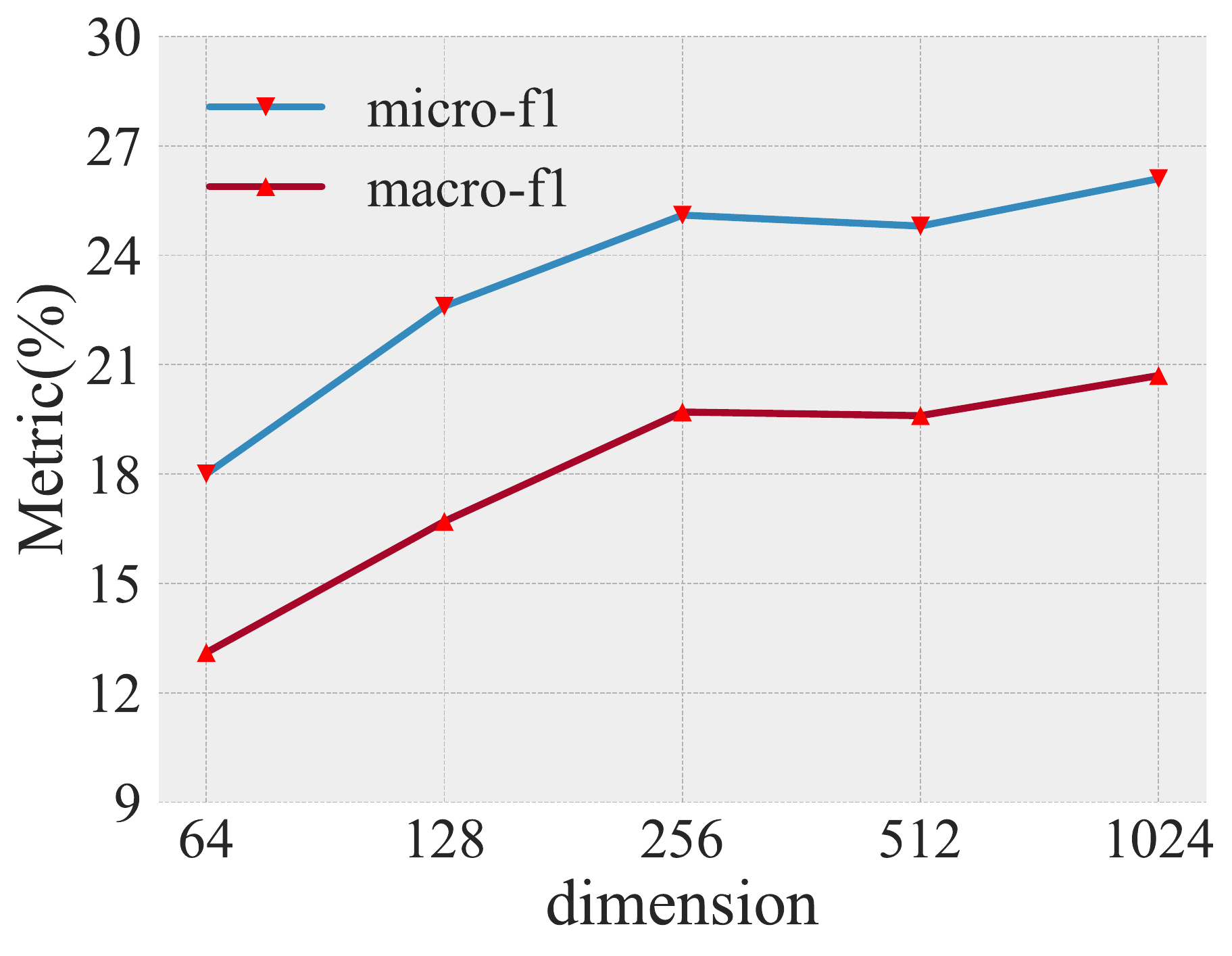}
    }
    \subfigure[\scriptsize{Flickr}]
    {
    \includegraphics[scale=0.225]{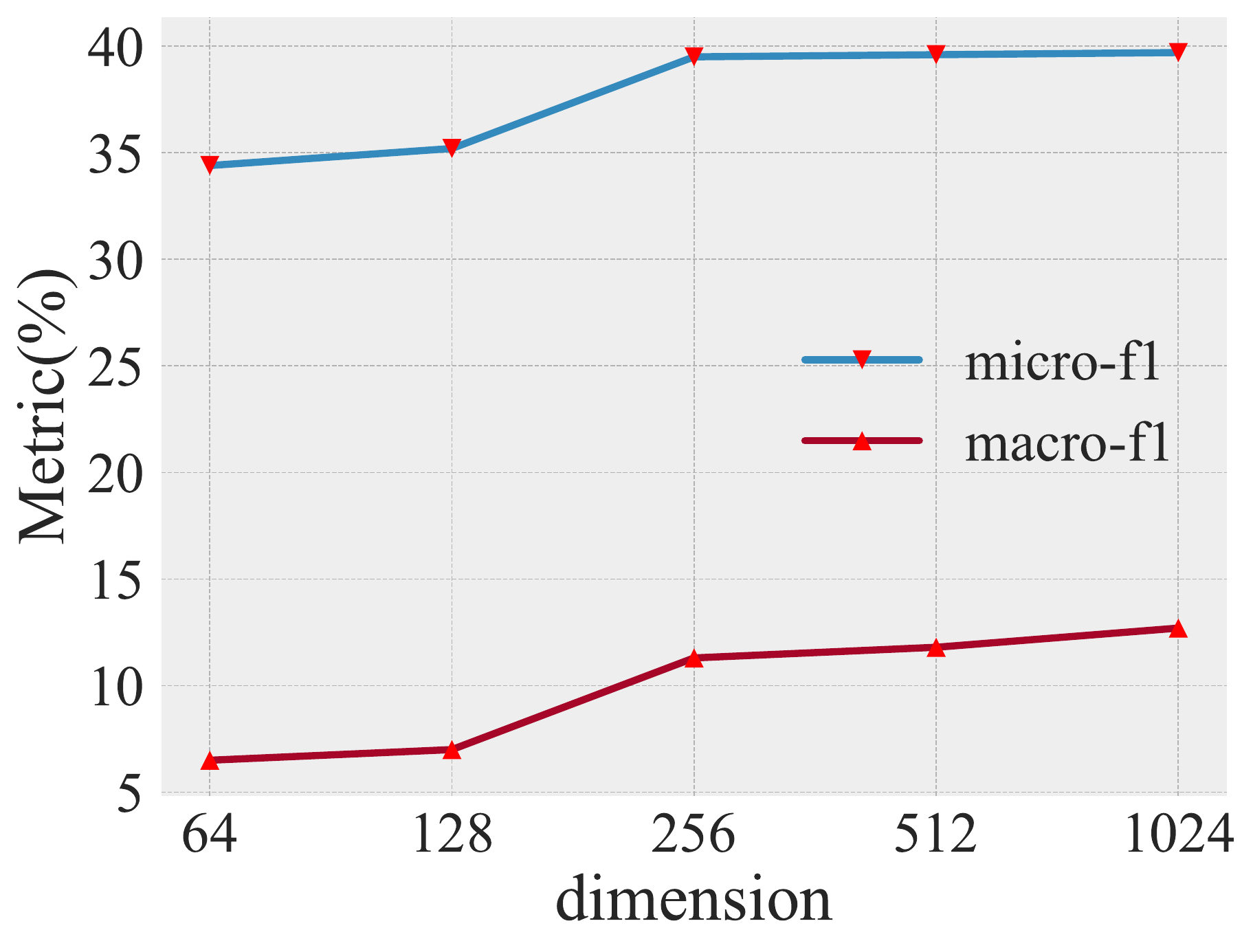}
    }
    \caption{Node classification performance of MNMF model w.r.t. the number of dimensions.}
    \label{parameter_dims}
\end{figure}

\begin{figure}[t]
    \centering
    \subfigure[\scriptsize{AMiner\_small}]
    {
    \includegraphics[scale=0.225]{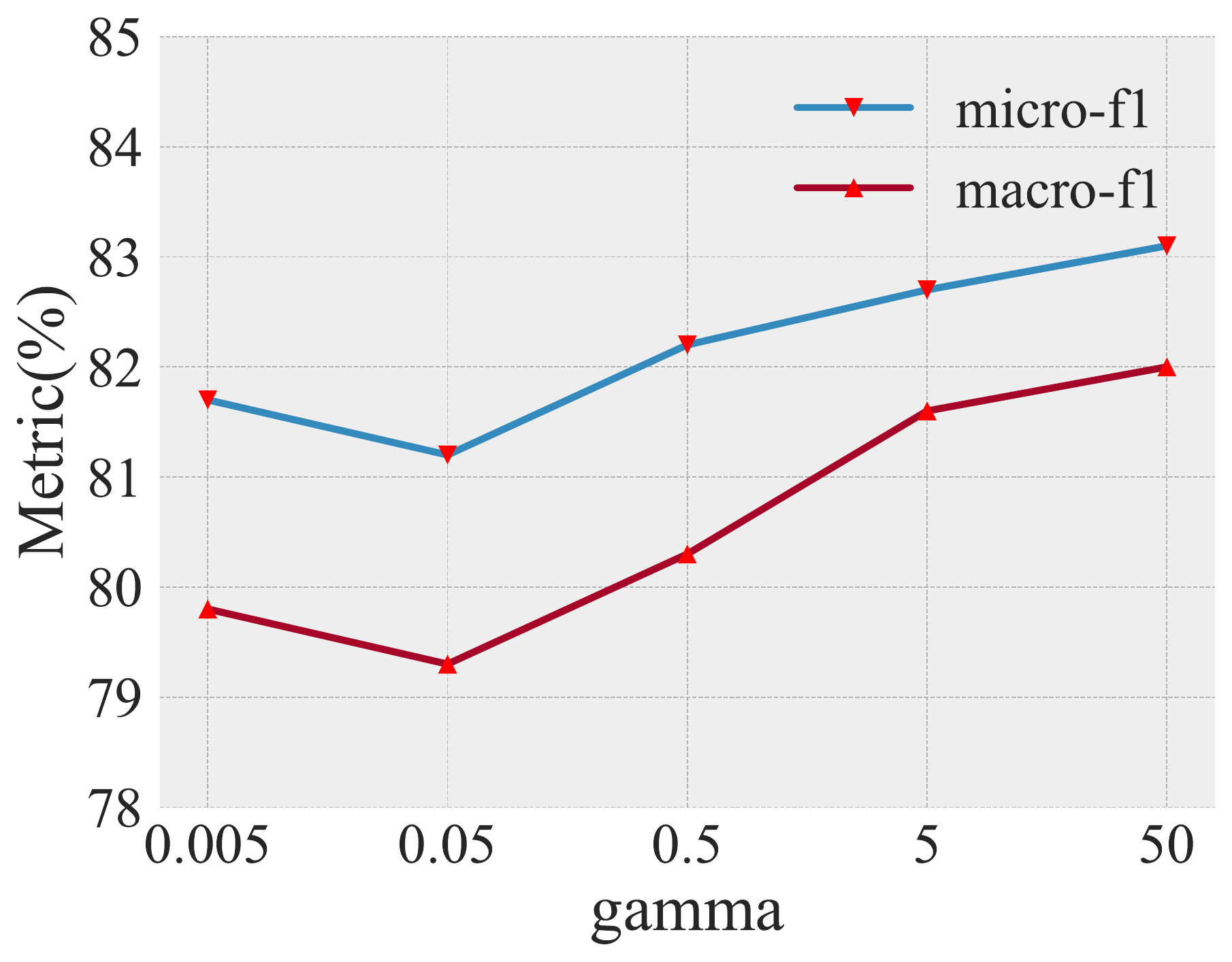}
    }
    \subfigure[\scriptsize{PPI}]
    {
    \includegraphics[scale=0.225]{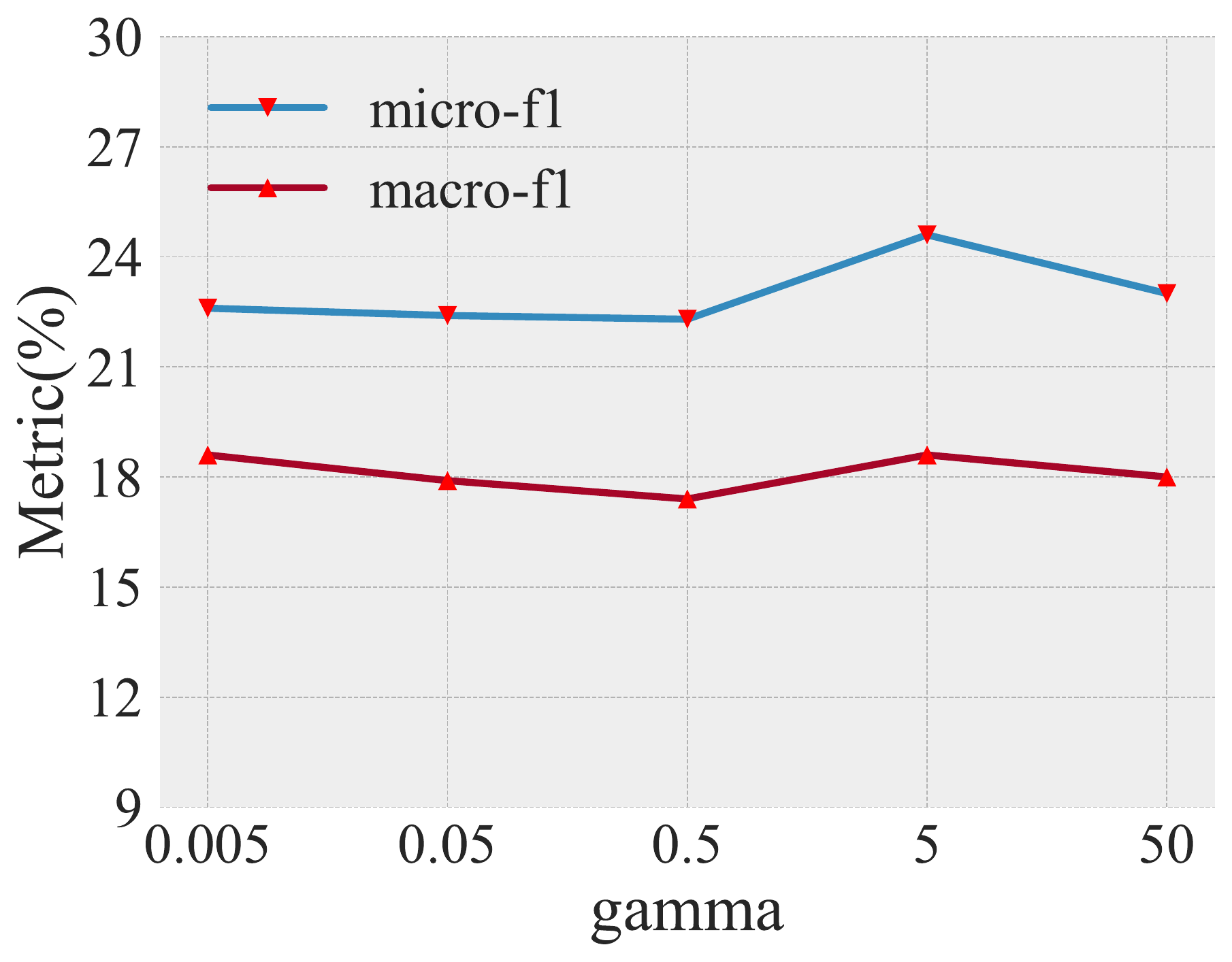}
    }
    \subfigure[\scriptsize{Flickr}]
    {
    \includegraphics[scale=0.225]{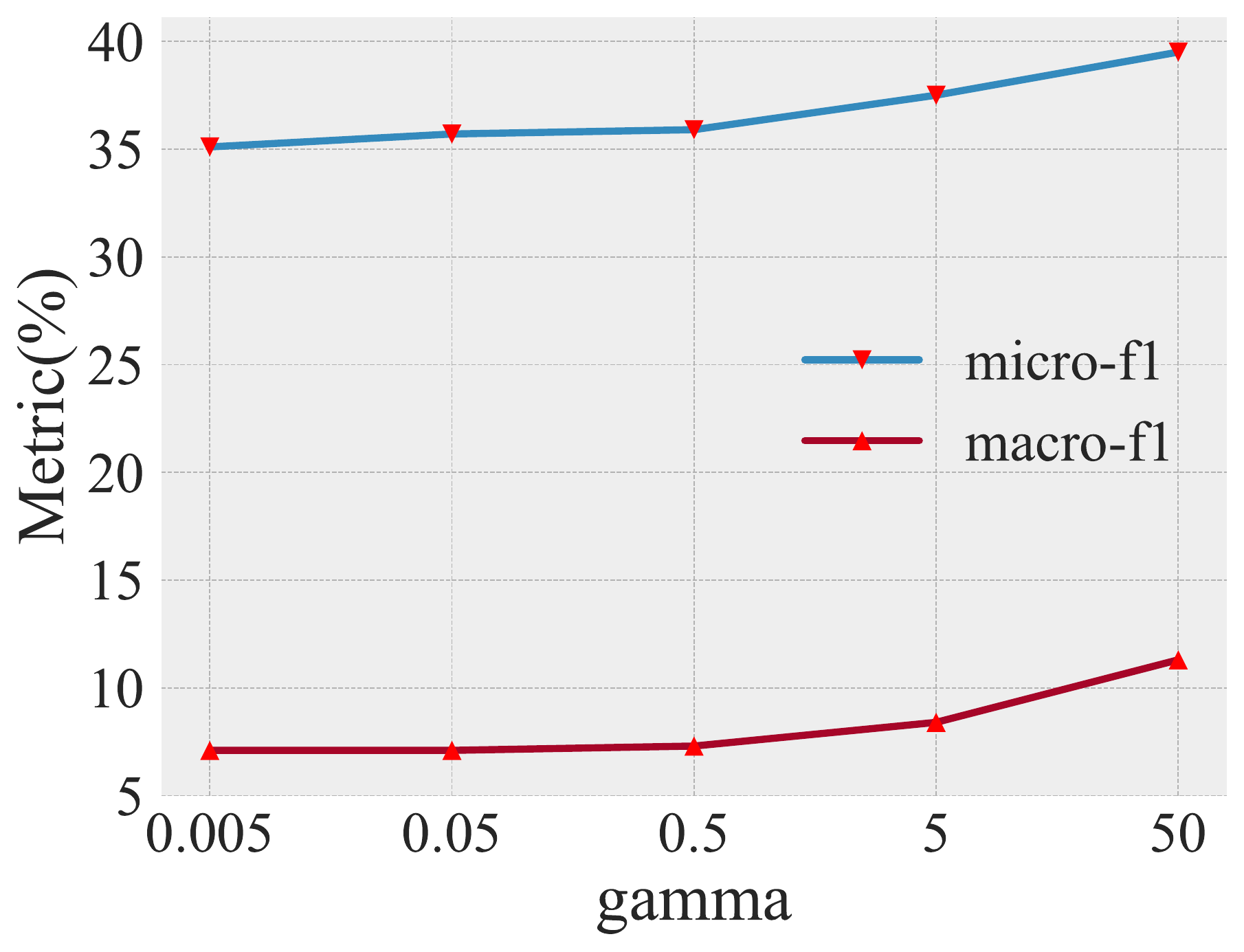}
    }
    \caption{Node classification performance of MNMF model w.r.t. the value of $\gamma$.}
    \label{parameter_gamma}
\end{figure}

\begin{figure}[t]
    \centering
    \subfigure[\scriptsize{AMiner\_small}]
    {
    \includegraphics[scale=0.225]{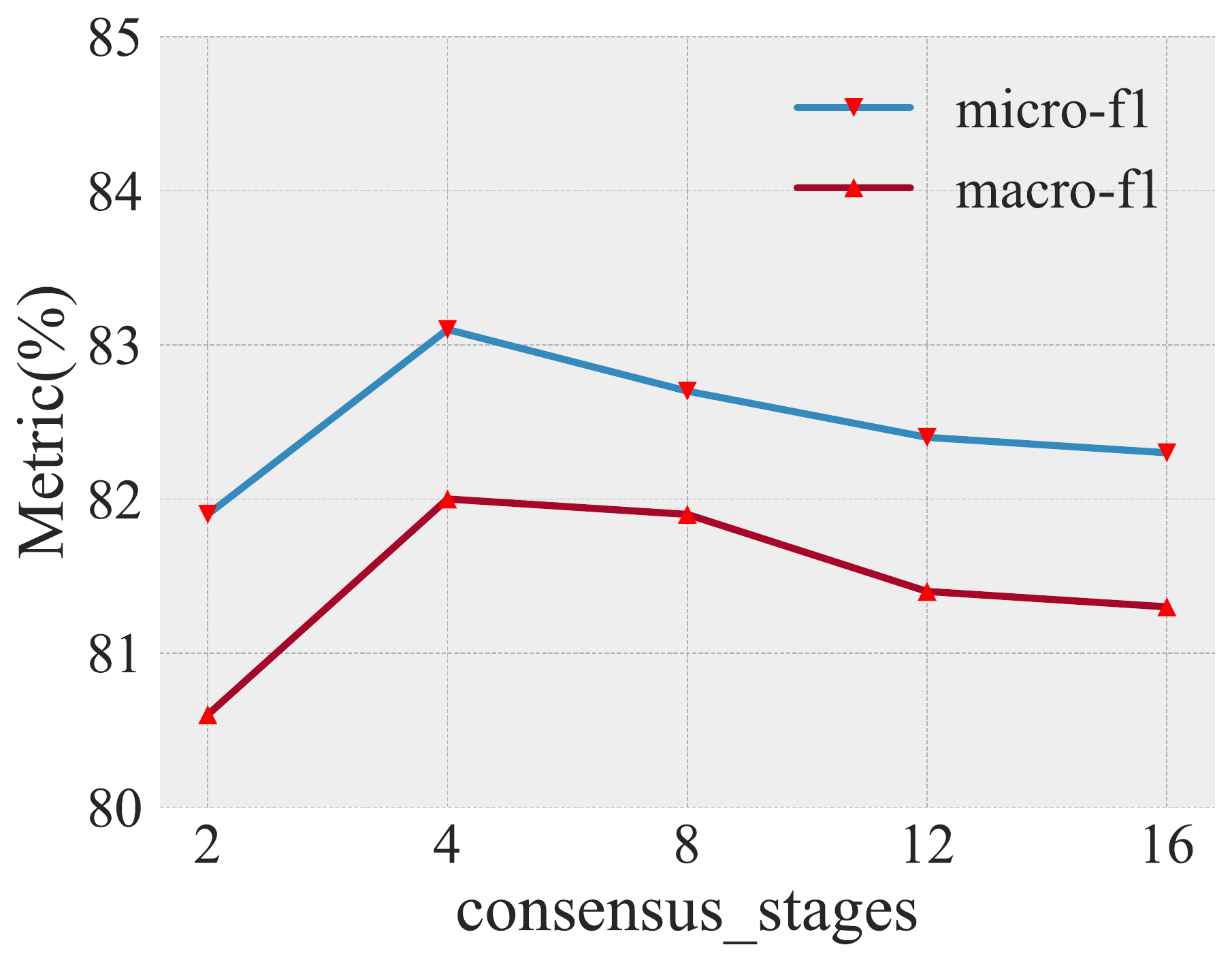}
    }
    \subfigure[\scriptsize{PPI}]
    {
    \includegraphics[scale=0.225]{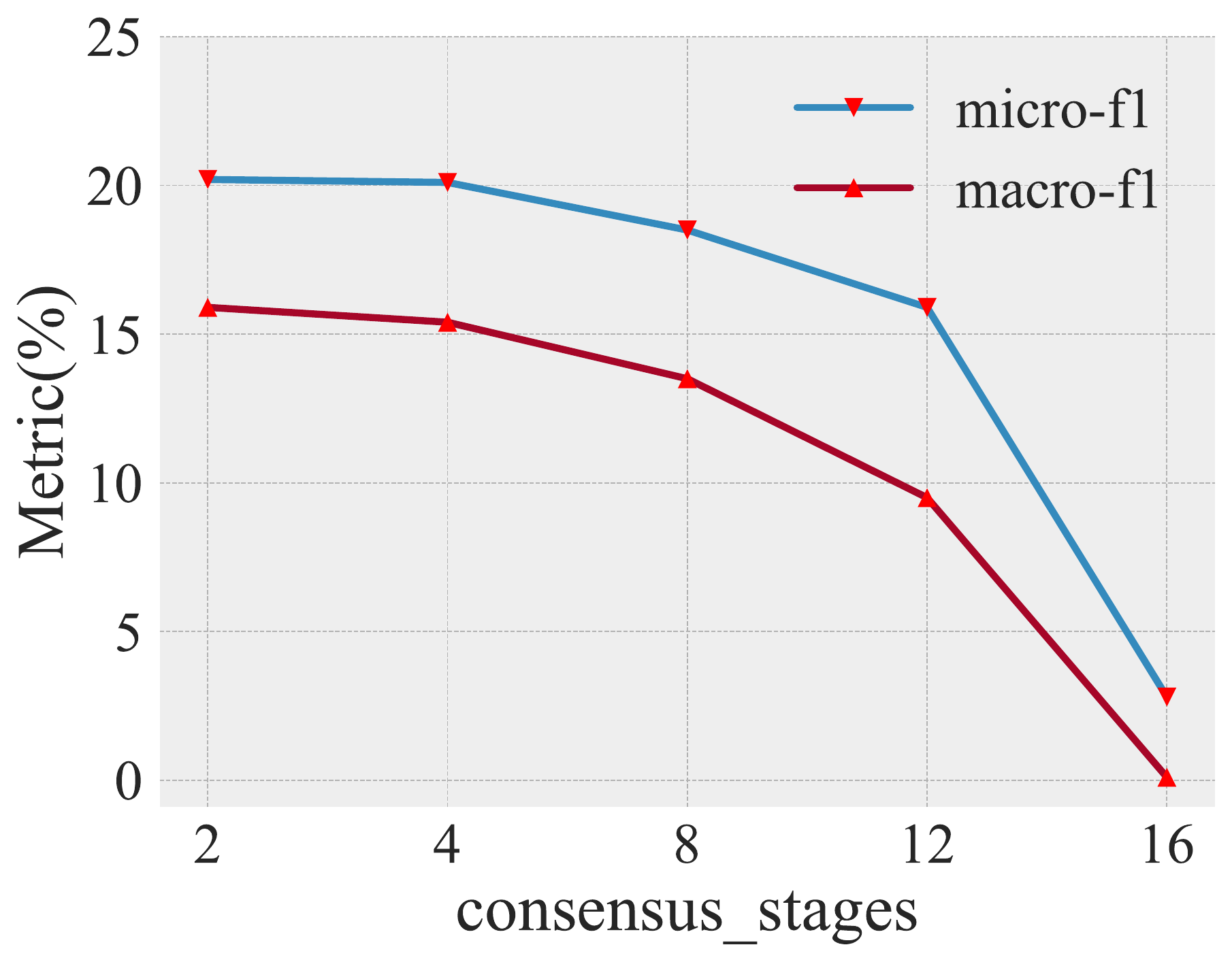}
    }
    \subfigure[\scriptsize{Flickr}]
    {
    \includegraphics[scale=0.225]{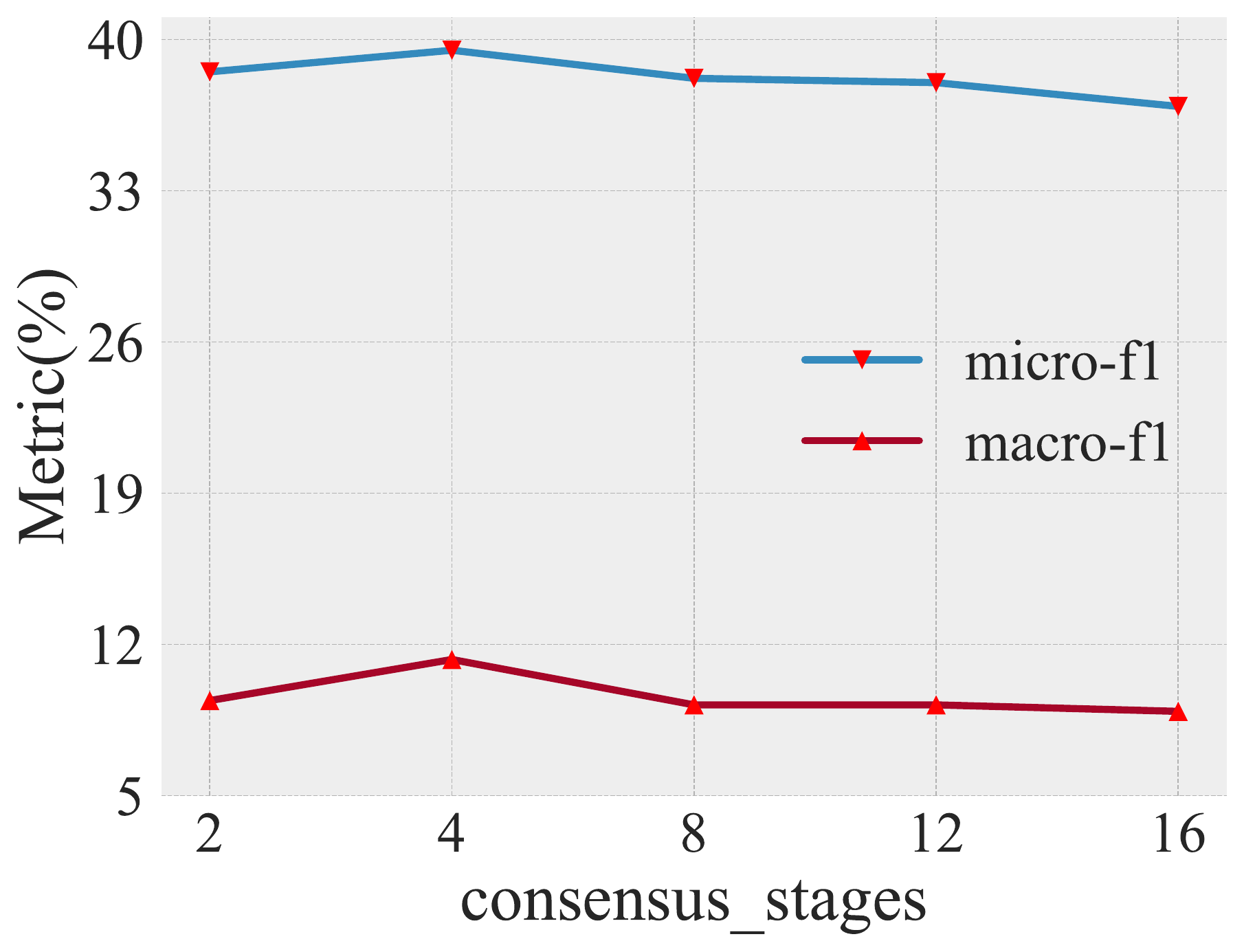}
    }
    \caption{Node classification performance of MNMF model w.r.t. the number of consensus stages.}
    \label{parameter_consensus_stages}
\end{figure}

\subsection{Parameter Sensitivity (Q3)}
Here we discuss the parameter sensitivity of MNMF model. We take AMiner\_small, PPI and Flickr datasets as examples and keep the training ratio as 0.5, aiming to investigate how the representation dimensions, the different values of $\gamma$ and the different number of consensus stages affect the performance on node classification task. We first analyze the influence of the representation dimensions. As shown in Fig.~\ref{parameter_dims}, All datasets follow the same pattern. When the dimension is too low, the performance is not satisfactory. The performance initially raises when the dimension increases, since a low dimension is unable to encode sufficient information in the latent space. However, if the dimension continuously increases, the performance begins to increase not so obviously. This is due to that too high dimension representations may overfit the network information. As a result, we conclude that we should choose an appropriate embedding dimension, such as 256. Too lower the dimension is, the embeddings are not able to preserve the complete information; Too higher the dimension is, the embeddings take up more storage space and increase the risk of overfitting.

Secondly, from Fig.~\ref{parameter_gamma} we can see that hyper-parameter $\gamma$ actually influences the results. It is efficient to use the one parameter $\gamma$ for controlling the different importance of each view during the optimization process dynamically. According to Eq.\eqref{alpha_update} we would assign equal weights to all views when $\gamma$ closes to $\infty$. When $\gamma$ closes to 1, the weight for the view whose $\bm{W}_k^l$ value is smallest will be assigned as 1, while others are almost ignored since their weights are close to 0. A small $\gamma$ will lead the poor performance especially for Macro-f1 score, thus we need to choose a suitable $\gamma$ in order to guarantee the weights being assigned to each view properly. According to Fig.~\ref{parameter_gamma}, it is more suitable to set the value of $\gamma$ larger than 5.

Finally, we examine the influence of the number of stages for preserving consensus information (corresponding to Eq.~\eqref{consensus}). We set the total number of stages as 16. As shown in Fig.~\ref{parameter_consensus_stages}, we observe consistent trends across the three datasets. As the consensus information is usually less than unique information, we should pay more attention to the unique information from all views. Compared with existing methods~\cite{lu2019auto,xu2019multi}, our MNMF model is more flexible to tune the importance of consensus information by setting different stages for preserving it. On the whole, the performance of MNMF model will first increase with the consensus stages going deep and then decrease dramatically. 

\begin{figure}
    \centering
    \includegraphics[scale=0.4]{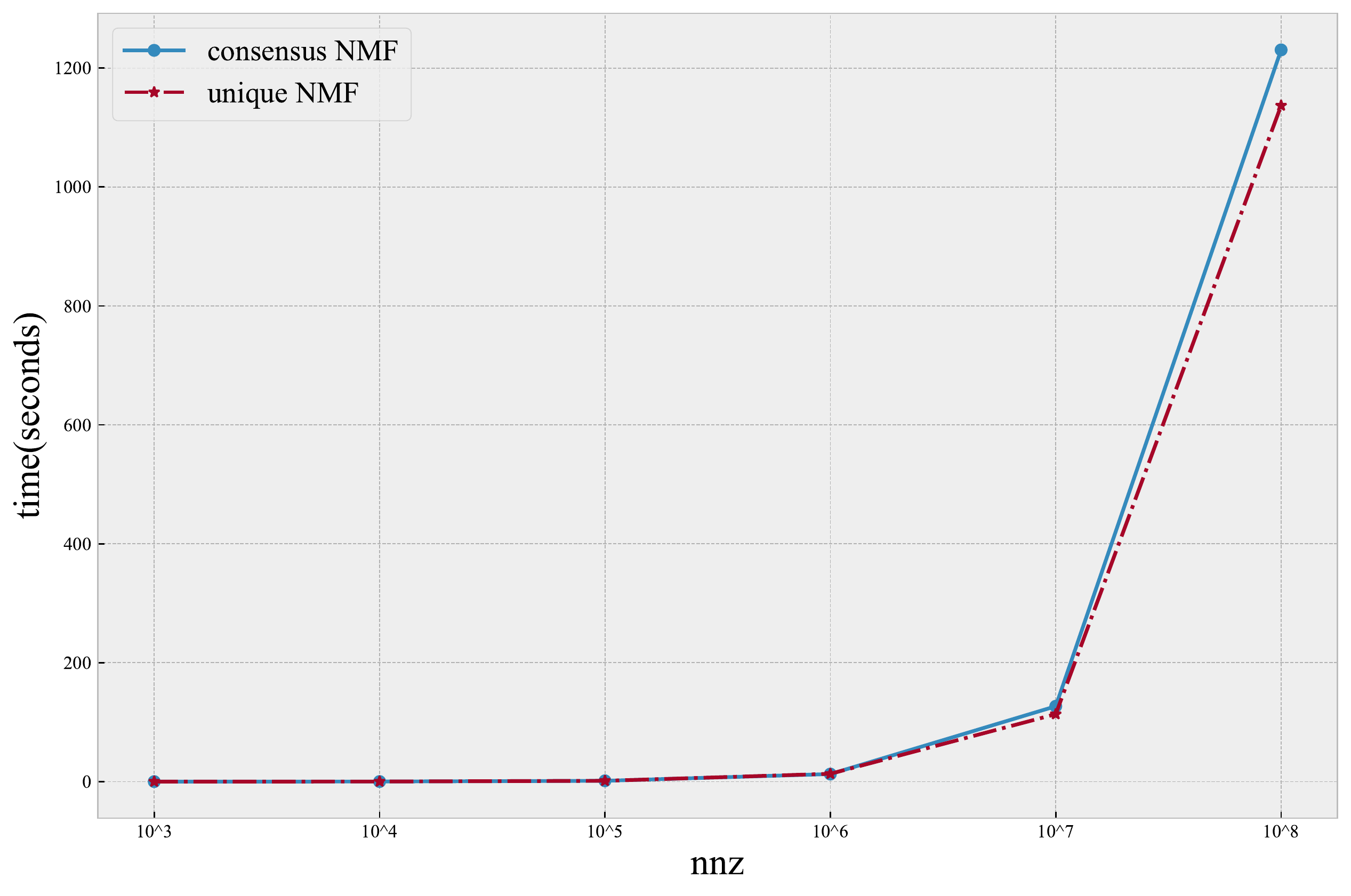}
    \caption{Running time on synthetic data}
    \label{time_cost}
\end{figure}

\subsection{Analysis of Running Time}
As we discussed in Sec.~\ref{complexity}, the time complexity of MNMF is positively related to the number of nonzero entries~(nnz) of the factorized matrices. In order to examine the time cost of the MNMF, we utilize the synthetic data by controlling nnz artificially to test the running time and the results are shown in Fig.~\ref{time_cost}. The "consensus NMF" means the running time of 50 iterations of consensus NMF process per stage and the "unique NMF" means the running time of 50 iterations of unique NMF process per stage, which is also a standard NMF process. As we can see, the consensus NMF process almost has a same running time as unique NMF process when the data size is relatively small. When we test on a large size data~(nnz equals $10^8$), the running time of the consensus NMF process is also close to standard NMF process. To some extent, our proposed consensus NMF process, existing various changes to the standard NMF process, does not significantly increase the complexity of it.

\section{Conclusion and Future Work} \label{conclusion}
In this paper, we try to explore the heterogeneous edges for network embedding. By considering each type of relationship as a view, we formalize the task as a multi-view network embedding problem and propose a novel MNMF model to solve it. More specifically, our MNMF model utilizes multi-stage strategy to preserve the consensus and unique information in multiple views. Meanwhile, the multi-stage matrix factorization method also serves the purpose for reducing the approximation error. We further demonstrate the MNMF model is a gradient boosting-like matrix factorization approach, thus the reason of the superiority of the MNMF model is also illustrated. To solve the optimization problem of MNMF effectively, we have introduced an coordinate gradient descent method via updating each variable alternatively. Experimental results not only indicate the effectiveness of the proposed model but also investigate the contributions of different modules.

Despite our model has achieved satisfactory results on representing networks with heterogeneous edges, there are still much room for future work. First, we will apply our model to more realistic scenarios, such as recommendation systems, community detection, personas and so on. Second, instead of tuning the hyper-parameter manually, we want to figure out the methods that automatically determine the stages responsible for preserving consensus information or preserving unique information. Moreover, a meaningful direction is extending the existing MNMF model to represent general heterogeneous networks, that is, the nodes and edges of the networks have multiple types at the same time, which is a challenging but highly significant problem.

\section*{Acknowledgement}
The research was supported by National Natural Science Foundation of China (No. 61802140).

\bibliographystyle{ACM-Reference-Format}
\bibliography{ref}

\appendix

\section*{Appendix: Optimization for MNMF model}
In this part, we give the details of the optimization process for MNMF model.
Since the objective function in Eq.~\eqref{consensus} is not convex over all variables simultaneously \cite{lee1999learning}, we optimize them by successively updating only one variable at a time, while fixing the other variables. In order to update $\bm{U}^l$ and $\bm{V}_k^l$, according to matrix property $\mathrm{Tr}(\bm{AB}) = \mathrm{Tr}(\bm{BA})$ and $\mathrm{Tr}(\bm{A}^\mathrm{T}) = \mathrm{Tr}(\bm{A})$, the objective function for Eq.~\eqref{consensus} can be rewritten as:
\begin{equation}
\begin{aligned}
    \mathcal{O} &= \sum_{k=1}^K (\alpha_k^l)^{\gamma} \mathrm{Tr}((\bm{R}_k^l-\bm{U}^l\bm{V}_k^l)^\mathrm{T}(\bm{R}_k^l-\bm{U}^l\bm{V}_k^l)) \\
    &= \sum_{k=1}^K (\alpha_k^l)^{\gamma} [\mathrm{Tr}(\bm{R}_k^l{\bm{R}_k^l}^\mathrm{T}) -2\mathrm{Tr}(\bm{R}_k^l{\bm{V}_k^l}^\mathrm{T}{\bm{U}^l}^\mathrm{T}) \\
    &+ \mathrm{Tr}(\bm{U}^l\bm{V}_k^l{\bm{V}_k^l}^\mathrm{T}{\bm{U}^l}^\mathrm{T})]
\end{aligned}
\label{obj4consensus}
\end{equation}
Here $\mathrm{Tr}(\bm{X})$ means the trace of matrix $\bm{X}$. Let $\bm{\Psi}^l$ and $\bm{\Theta}_k^l$ be the Lagrangian multiplier matrices for the constraint $\bm{U}^l \geq 0$ and $\bm{V}_k^l \geq 0$. We introduce Lagrange function as:
\begin{equation}
    \begin{aligned}
        \mathcal{L} &=  \mathcal{O} + \mathrm{Tr}(\bm{\Psi}^l{\bm{U}^l}^\mathrm{T}) + \sum_{k=1}^K \mathrm{Tr}(\bm{\Theta}_k^l{\bm{V}_k^l}^\mathrm{T})
    \end{aligned}
\label{lag4consensus}
\end{equation}
The partial derivatives for $\mathcal{L}$ with respect to $\bm{U}^l$ and $\bm{V}_k^l$ are: 
\begin{equation}
    \begin{aligned}
        \frac{\delta \mathcal{L}}{\delta \bm{U}^l} &= \sum_{k=1}^K (\alpha_k^l)^{\gamma}(-2\bm{R}_k^l{\bm{V}_k^l}^\mathrm{T} + 2\bm{U}^l\bm{V}_k^l{\bm{V}_k^l}^\mathrm{T}) + \bm{\Psi}^l \\
        \frac{\delta \mathcal{L}}{\delta \bm{V}_k^l} &= -2(\alpha_k^l)^{\gamma} {\bm{U}^l}^\mathrm{T} \bm{R}_k^l + 2 (\alpha_k^l)^{\gamma} {\bm{U}^l}^\mathrm{T}\bm{U}^l\bm{V}_k^l + \Theta_k^l
    \end{aligned}
    \label{de4consensus}
\end{equation}
Using the KKT conditions \cite{boyd2004convex} ${\bm{\Psi}^l}\odot{\bm{U}^l} = 0$ and ${\bm{\Theta}_k^l}\odot{\bm{V}_k^l} = 0$, we get the following equations:
\begin{equation}
    \begin{aligned}
        (\sum_{k=1}^K (\alpha_k^l)^{\gamma}(-\bm{R}_k^l{\bm{V}_k^l}^T + \bm{U}^l\bm{V}_k^l{\bm{V}_k^l}^T))\odot \bm{U}^l = 0 
    \end{aligned}
    \label{eq4consensus_U}
\end{equation}
\begin{equation}
    (-{\bm{U}^l}^\mathrm{T} \bm{R}_k^l + {\bm{U}^l}^\mathrm{T}\bm{U}^l\bm{V}_k^l)\odot \bm{V}_k^l = 0
    \label{eq4consensus_V}
\end{equation}
Then we drive the update rules as:
\begin{equation}
    \begin{aligned}
        \bm{U}^l &\leftarrow \bm{U}^l \odot \frac{\sum_k^K (\alpha_k^l)^{\gamma} \bm{R}_k^l {\bm{V}_k^l}^\mathrm{T}}{\sum_k^K (\alpha_k^l)^{\gamma} \bm{U}^l \bm{V}_k^l {\bm{V}_k^l}^\mathrm{T}} \\
        \bm{V}_k^l &\leftarrow \bm{V}_k^l \odot \frac{{\bm{U}^l}^\mathrm{T} \bm{R}_k^l}{{\bm{U}^l}^\mathrm{T} \bm{U}^l \bm{V}_k^l}
    \end{aligned}
    \label{solution_UV}
\end{equation}
Here $\odot$ and $\frac{[\cdot]}{[\cdot]}$ are element-wise multiplication and division respectively.

\begin{theorem}\label{th1}
The limited solutions of the updating rules in Eq.~\eqref{solution_UV} satisfy the KKT optimality condition.
\end{theorem}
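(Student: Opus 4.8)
The plan is to follow the standard Lee–Seung style argument for multiplicative update rules, adapted to the weighted multi-view objective in Eq.~\eqref{consensus}. The key observation is that a converged (fixed) point of the iteration in Eq.~\eqref{solution_UV} is exactly a point where the multiplicative factor equals $1$ on the support of each variable, and this in turn forces the KKT stationarity conditions. Concretely, suppose $\bm{U}^l$ and the $\bm{V}_k^l$ are limit points of the updates. At such a limit point we have, for each entry $(i,j)$,
\[
  \bm{U}^l_{ij} = \bm{U}^l_{ij}\cdot\frac{\bigl(\sum_k (\alpha_k^l)^\gamma \bm{R}_k^l {\bm{V}_k^l}^\mathrm{T}\bigr)_{ij}}{\bigl(\sum_k (\alpha_k^l)^\gamma \bm{U}^l \bm{V}_k^l {\bm{V}_k^l}^\mathrm{T}\bigr)_{ij}},
\]
and similarly for $\bm{V}_k^l$.

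The first step is to split into two cases for each entry. If $\bm{U}^l_{ij}>0$, then dividing through shows the numerator and denominator agree, i.e. $\bigl(\sum_k (\alpha_k^l)^\gamma (\bm{U}^l \bm{V}_k^l {\bm{V}_k^l}^\mathrm{T} - \bm{R}_k^l {\bm{V}_k^l}^\mathrm{T})\bigr)_{ij}=0$, which is precisely Eq.~\eqref{eq4consensus_U} holding with the complementarity already built in. If $\bm{U}^l_{ij}=0$, I must separately argue that the gradient entry $\bigl(\tfrac{\delta\mathcal O}{\delta \bm U^l}\bigr)_{ij}\ge 0$, i.e. that the corresponding Lagrange multiplier $\bm{\Psi}^l_{ij}$ is nonnegative; this follows because the denominator $\bigl(\sum_k (\alpha_k^l)^\gamma \bm{U}^l \bm{V}_k^l {\bm{V}_k^l}^\mathrm{T}\bigr)_{ij}$ is nonnegative and the update keeps iterates nonnegative, so the residual $\bm{R}_k^l\ge 0$ together with nonnegativity of all factors makes the numerator nonnegative, and one reads off the sign of the gradient from the comparison — this needs to be done carefully since it is the only place the structure of the problem (nonnegativity of $\bm R_k^l$) is really used. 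The same two-case analysis applies verbatim to each $\bm{V}_k^l$ using Eq.~\eqref{eq4consensus_V}. Assembling both cases gives exactly the KKT system: $\bm{U}^l,\bm{V}_k^l\ge 0$, the stationarity equations $\tfrac{\delta\mathcal L}{\delta \bm U^l}=0$ and $\tfrac{\delta\mathcal L}{\delta \bm V_k^l}=0$ with $\bm\Psi^l,\bm\Theta_k^l\ge 0$, and the complementary slackness $\bm\Psi^l\odot\bm U^l=0$, $\bm\Theta_k^l\odot\bm V_k^l=0$.

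I expect the main obstacle to be the zero-entry case: showing that at a limit point the gradient component is of the correct sign (so that a valid nonnegative multiplier exists) rather than merely that the update leaves the entry unchanged. A limit point can have a zero entry either because the entry was always zero or because it is being driven to zero in the limit; in the latter situation one should pass to the limit in the ratio and argue, using that the denominators stay bounded away from $0$ (or handling the degenerate case where a whole row/column of a factor vanishes), that the limiting numerator cannot strictly exceed the limiting denominator. A clean way to package this is to note that the multiplicative update is the exact minimizer-step of an auxiliary function majorizing $\mathcal O$ (the standard Lee–Seung auxiliary function, applied separately in the $\bm U^l$-block and each $\bm V_k^l$-block, with the weights $(\alpha_k^l)^\gamma$ absorbed into the summands), so a fixed point is a block-wise stationary point of $\mathcal O$ over the nonnegative orthant, and block-wise stationarity over a product of orthants is equivalent to the full KKT conditions because the constraints decouple across blocks. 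With that observation the theorem reduces to the by-now-routine fact that a fixed point of a majorization–minimization scheme satisfies first-order optimality, and the remaining work is just bookkeeping to match it to the Lagrangian form in Eq.~\eqref{lag4consensus}–Eq.~\eqref{eq4consensus_V}.
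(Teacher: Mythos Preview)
Your core argument is correct and matches the paper's proof exactly: at a fixed point of the update one has $\bm U^l=\bm U^l\odot(\text{numerator}/\text{denominator})$, which rearranges algebraically to precisely Eq.~\eqref{eq4consensus_U}, and likewise for each $\bm V_k^l$ with Eq.~\eqref{eq4consensus_V}. The paper's proof stops there and never treats the zero-entry case or the sign of the multipliers separately---its ``KKT conditions'' are just Eqs.~\eqref{eq4consensus_U}--\eqref{eq4consensus_V}---so your additional case analysis and majorization--minimization framing are more careful than what the paper itself provides, and for the purpose of matching the paper your $\bm U^l_{ij}>0$ case alone already does the job.
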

\begin{proof}
At convergence, we have ${\bm{U}^l}^{(\infty)} = {\bm{U}^l}^{(t+1)} = {\bm{U}^l}^{(t)} = \bm{U}^l$,  where $t$ denotes the $t$-th iteration. That is,
\begin{equation}
    \bm{U}^l = \bm{U}^l \odot \frac{\sum_k^K (\alpha_k^l)^{\gamma} \bm{R}_k^l {\bm{V}_k^l}^\mathrm{T}}{\sum_k^K (\alpha_k^l)^{\gamma} \bm{U}^l \bm{V}_k^l {\bm{V}_k^l}^\mathrm{T}}
\end{equation}
which is equivalent to 
\begin{equation}
    (\sum_{k=1}^K (\alpha_k^l)^{\gamma}(-\bm{R}_k^l{\bm{V}_k^l}^\mathrm{T} + \bm{U}^l\bm{V}_k^l{\bm{V}_k^l}^\mathrm{T}))\odot \bm{U}^l = 0
\label{kkt}
\end{equation}
Clearly, Eq.~\eqref{kkt} is identical to Eq.~\eqref{eq4consensus_U}. In the same way, the correctness of the updating rules in Eq.~\eqref{update4consensus} for $\bm{V}_k^l$ can be proved.
\end{proof}

We follow the similar method in \cite{cai2013multi} to update $\alpha_k^l$ in Eq.\eqref{consensus}. Let's denote $\bm{W}_k^l = {\Vert \bm{R}_k^l - \bm{U}^l\bm{V}_k^l \Vert_F^2}$, then Eq.~\eqref{consensus} is reformulated as
\begin{equation}
    \begin{aligned}
    \text{min} &\qquad \sum_{k=1}^K (\alpha_k^l)^{\gamma} \bm{W}_k^l,\\
    \text{s.t.} &\qquad \sum_{k=1}^K \alpha_k^l=1, \alpha_k^l \geq 0 \\
    \end{aligned}
    \label{consensus_r}
\end{equation}
The Lagrange function of Eq.~\eqref{consensus_r} is derived as:
\begin{equation}
\centering
    \text{min} \qquad \sum_{k=1}^K (\alpha_k^l)^{\gamma} \bm{W}_k^l - \lambda(\sum_{k=1}^K \alpha_k^l - 1)
\label{lag}
\end{equation}
Here $\lambda$ is the Lagrangian multiplier. By taking the derivative of Eq.~\eqref{lag} with respect to $\alpha_k^l$ as zero, we get
\begin{equation}
    \centering
    \alpha_k^l = \left( \frac{\lambda}{\gamma \bm{W}_k^l} \right)^{\frac{1}{\gamma - 1}}
    \label{lagder}
\end{equation}
After that we replace $\alpha_k^l$ in Eq.~\eqref{lagder} into $\sum_{k=1}^K \alpha_k^l = 1$ to obtain update rule for $\alpha_k^l$:
\begin{equation}
    \centering
    \alpha_k^l \leftarrow \frac{\left(\gamma \bm{W}_k^l\right)^{\frac{1}{1-\gamma}}}{\sum_{k=1}^K\left(\gamma \bm{W}_k^l\right)^{\frac{1}{1-\gamma}}}
\end{equation}

\end{document}